\newcommand{\R}{\mathbb{R}}

\let \Ra \Rightarrow

\let \Lra \Leftrightarrow

\documentclass[runningheads]{llncs}
\usepackage[utf8]{inputenc}
\usepackage{amsfonts, amsmath, amssymb} 
\PassOptionsToPackage{hyphens}{url}\usepackage{hyperref}
\usepackage{cleveref}
\usepackage{cite}

\DeclareMathOperator{\sign}{sign}

\begin{document}

\title{Polynomial Threshold Functions for Decision Lists}
\author{Vladimir Podolskii\inst{1,2} \and Nikolay V. Proskurin\inst{3}}
\institute{Courant Institute of Mathematical Sciences, New York University, New York, USA \and Steklov Mathematical Institute of Russian Academy of Sciences, Moscow, Russia \and HSE University, Moscow, Russia}
\maketitle

\begin{abstract}
For $S \subseteq \{0,1\}^n$ a Boolean function $f \colon S \to \{-1,1\}$ is a polynomial threshold function (PTF) of degree $d$ and weight $W$ if there is a polynomial $p$ with integer coefficients of degree $d$ and with sum of absolute coefficients $W$ such that $f(x) = \sign p(x)$ for all $x \in S$.
We study a representation of decision lists as PTFs over Boolean cubes $\{0,1\}^n$ and over Hamming balls $\{0,1\}^{n}_{\leq k}$.

As our first result, we show that for all $d = O\left( \left( \frac{n}{\log n}\right)^{1/3}\right)$ any decision list over $\{0,1\}^n$ can be represented by a PTF of degree $d$ and weight $2^{O(n/d^2)}$. This improves the result by Klivans and Servedio~\cite{DBLP:journals/jmlr/KlivansS06} by a $\log^2 d$ factor in the exponent of the weight. Our bound is tight for all $d = O\left( \left( \frac{n}{\log n}\right)^{1/3}\right)$ due to the matching lower bound by Beigel~\cite{beigel}.

For decision lists over a Hamming ball $\{0,1\}^n_{\leq k}$ we show that the upper bound on weight above can be drastically improved to $n^{O(\sqrt{k})}$ for $d = \Theta(\sqrt{k})$. We also show that similar improvement is not possible for smaller degrees by proving the lower bound $W = 2^{\Omega(n/d^2)}$ for all $d = O(\sqrt{k})$.
\end{abstract}

\section{Introduction}

The main object of studies in this paper are polynomial threshold functions.

\begin{definition}
    For $S \subseteq \{0,1\}^n$ a Boolean function $f \colon S \to \{-1,1\}$ is called a polynomial threshold function (PTF) of degree $d$ if there is an integer polynomial $p$ of degree $d$ such that $f(x) = \sign p(x)$ for all $x \in S$. If $p$ has integer coefficients, then the weight of a PTF is defined as a sum of absolute values of coefficients in $p$. 
\end{definition}

Polynomial threshold functions have been studied
intensively for decades. Much of this work was motivated by questions
in computer science~\cite{SIC:Saks93}, and PTFs are now an important
object of study in areas such as Boolean circuit
complexity~\cite{SICOMP:BS92,CC:GHR92,CCC:Beigel93,CC:KrausePudlak98},
learning theory~\cite{JCSS:KODS04,JCSS:KS04,DiakonikolasOSW11,BhattacharyyaGS18,DiakonikolasKS18a}, and communication complexity~\cite{EATCS:Sherstov08}.

In this paper, we study PTFs for the class of \emph{decision lists}. 
\begin{definition}
    A decision list $L$ on variables $x_{1},\ldots, x_n$ is a sequence of $h$ pairs and a bit 
    $$
    (\ell_1, b_1), (\ell_2,b_2),\ldots, (\ell_h,b_h), b_{h+1},
    $$
    where for all $i$ $\ell_i$ is a literal (either a Boolean variable $x_j$ or its negation) and $b_i \in \{-1,1\}$. On the input $x \in \{0,1\}^n$ the value of $L(x)$ is equal to $b_i$ if $i$ is the minimal index for which $\ell_i$ is true and if all $\ell_i$ are false, the value of $L(x)$ is equal to $b_{h+1}$.
\end{definition}
Decision lists have been widely studied in computational complexity~\cite{DBLP:books/cu/10/AnthonyCH10b, DBLP:journals/ipl/Blum92, DBLP:journals/cjtcs/ChattopadhyayMM20, DBLP:journals/jmlr/KlivansS06}. Partially, their importance stems from computational learning theory where attribute efficient learning of decision lists is an important research direction. One of the approaches to learning decision lists is through their representation as PTFs, to which several known learning algorithms can be applied~\cite{DBLP:journals/tcs/HellersteinS07, JCSS:KS04, DBLP:journals/jmlr/KlivansS06, DBLP:journals/combinatorica/ODonnellS10}. This constitutes one of the reasons to study PTF representations of decision lists. Decision lists turned out to be important in other areas of theoretical computer science as well. They provide a good source of lower bounds in the studies of threshold functions, threshold circuits, oracle computations and in other fields~\cite{beigel,DBLP:conf/coco/BuhrmanVW07,DBLP:conf/focs/ChattopadhyayM18, DBLP:conf/sofsem/UchizawaT15}.

A lot of effort has been put into the studies of effective PTF representations of Boolean functions, and for decision lists in particular. Klivans and Servedio \cite{DBLP:journals/jmlr/KlivansS06} have shown that for every $h < n$ any decision list can be computed by a PTF of degree $O(\sqrt{n}\log{n})$ and weight $ 2^{O(\frac{n}{h} + \sqrt{h}\log^2{n})} $; in terms of degree $d$ it results in $ 2^{O(\frac{n\log^2{d}}{d^2} + d\log{d})} $ weight bound. They used it to create the first online learning algorithm for decision lists that is subexponential in both running time and sample complexity; namely, their algorithm runs in time $ n^{O(n^{1/3}\log^{1/3}{n})} $ and mistake bound $ 2^{O(n^{1/3}\log^{4/3}{n})} $.
Servedio, Tan and Thaler proved in \cite[Theorem 6]{DBLP:journals/jmlr/ServedioTT12} that for any $ n^{1/4} \leq d \leq n $ any decision list $L$: $ \{-1, 1\}^n \to \{-1, 1\} $\footnote{In this result, input Boolean variables range over $\{-1,1\}$.} on $n$ variables has a degree-$d$ PTF of weight $ 2^{\tilde{O}_n((n/d)^{2/3})} $, where $ \tilde{O}_n $ suppresses the polylog factor of $n$. This bound is weaker for small $d$, but gives an upper bound for $d\geq \Omega(\sqrt{n})$. 
As for the lower bounds, Beigel~\cite{beigel} provided a decision list that requires weight of $ 2^{\Omega(n/d^2)} $ to be computed by a degree-$d$ PTF (we provide the details in Section~\ref{full}). 
Servedio, Tan and Thaler~\cite{DBLP:journals/jmlr/ServedioTT12} also proved a lower bound of $ 2^{\Omega(\sqrt{n/d})} $ for $ d = o\left( \frac{n}{\log^2{n}} \right) $, which is stronger than the bound of Beigel for $ d = \Omega(n^{1/3}) $.

Above we have discussed PTF representations of decision lists over the Boolean cube $\{0,1\}^n$. However, representations over its subsets are also relevant. The case of the Hamming ball $\{0,1\}^n_{\leq k}$ consisting of all vectors with at most $k$ 1s received some attention. Long and Servedio~\cite{DBLP:journals/siamdm/LongS14} gave bounds for the weights of PTFs for degree $d=1$. Their main motivation to study this setting comes from learning theory: in scenarios involving learning categorical data the common representation for examples is the one-hot encoded vector, which might have an extremely large amount of features, but only a small fraction of them can be active at the same time. The Winnow algorithm used in~\cite{DBLP:journals/jmlr/KlivansS06} can be applied to learn functions not only on the Boolean cube but also on its subsets, including $\{0,1\}^n_{\leq k}$, so it makes sense to study not only linear representations but also polynomial ones. Boolean functions on Hamming balls are also important outside of learning theory. For example, approximation of such functions by polynomials of low weight arises in the problem of indistinguishability~\cite{DBLP:conf/icalp/BogdanovW17, DBLP:journals/toct/HuangV22}, and recently the approximation on Hamming Balls in general received a lot of attention as a stepping stone for the approximation of some important Boolean functions~\cite{DBLP:journals/toc/BunKT20,Sherstov20}, including constant-depth circuits (AC$^0$)~\cite{DBLP:journals/siamcomp/BunT20,DBLP:journals/toc/BunT21,DBLP:conf/stoc/Sherstov22,DBLP:conf/stoc/SherstovW19}.

\paragraph*{Our results}

First, we address the PTF representations of decision lists over $\{0,1\}^n$. In our first result, we show that for every $h < n$ any decision list can be computed by a PTF of degree $O(\sqrt{n})$ and weight $ 2^{O(\frac{n}{h} + \sqrt{h}\log{n})} $. This improves the result of~\cite{DBLP:journals/jmlr/KlivansS06} by a logarithmic factor both in degree and exponent of weight. As a corollary, we slightly improve upon the upper bound for the attribute efficient learning of decision lists: we prove that there is an algorithm that learns decision lists in time $ n^{O(n^{1/3})} $ and mistake bound $ 2^{O(n^{1/3}\log{n})} $. 

In terms of degree vs. weight tradeoff, our results imply that for all $d = O\left( \left( \frac{n}{\log n}\right)^{1/3}\right)$ any decision list over $\{0,1\}^n$ can be computed by a PTF of degree $d$ and weight $2^{O(n/d^2)}$. The analogous result that follows the construction in~\cite{DBLP:journals/jmlr/KlivansS06} gave the weight upper bound of $2^{O((n \log^2 d)/d^2)}$. Our bound is tight for all $d = O\left( \left( \frac{n}{\log n}\right)^{1/3}\right)$ due to the matching upper bound by Beigel~\cite{beigel} mentioned above.

Clearly, our upper bound for the Boolean cube works also for decision lists over the Hamming ball $\{0,1\}^n_{\leq k}$ as they are just restrictions to a smaller domain. However, the tightness of this bound for the case of Hamming balls has to be investigated. We actually show that the upper bound can be improved to a polynomial in $n$ for $d$ roughly equal to $\sqrt{k}$. More precisely, we show that for $d = \Theta(\sqrt{k})$ any decision list over $\{0,1\}^n_{\leq k}$ can be represented with weight $n^{O(\sqrt{k})}$. However, we show that such an improvement is impossible for smaller degrees. In order to do so, we extend Beigel's lower bound to this setting, showing that there is a decision list that requires weight $2^{\Omega(n/d^2)}$ when computed by a PTF of degree $d = O(\sqrt{k})$.

\paragraph*{Our techniques}

To prove our upper bounds we adopt the same strategy as in~\cite{DBLP:journals/jmlr/KlivansS06}. We first represent a decision list as a sum of sublists, then we pointwise approximate each of the sublists. The improvement comes from the better approximation technique originated by Sherstov \cite{Sherstov20}, which we adapt for decision lists.

For the lower bound, we extend the proof of Beigel~\cite{beigel} to the setting of low-weight inputs. We use the same proof strategy of inductively constructing a sequence of inputs on which the value of the polynomial in PTF grows exponentially. The new ingredient in the proof is to keep the number of 1's in the input low by reusing them from the previous blocks.

\paragraph*{Organization}

In \cref{Preliminaries} we provide the necessary definitions and theorems. In \cref{full} we give our PTF construction for decision lists on $ \{0, 1\}^n $, and in \cref{limited} we prove the upper and the lower bounds for representing decision lists as a PTF on $ \{0, 1\}^n_{\leq k} $.

\section{Preliminaries} \label{Preliminaries}
We use the following notation for the \emph{arithmetization} of a literal $ \ell $:
\[
\tilde{\ell} = 
\begin{cases}
    x, \quad \ell \text{ is an unnegated variable } x, \\
    1 - x, \quad \ell \text{ is a negated variable } \overline{x}, \\
\end{cases}
\]
We also use the notations $ [n] = \{1, 2, \ldots, n\} $ and $|||p|||$ for the sum of absolute values of coefficients in $p$ (therefore, the weight of a PTF $p$ is equals to $|||p|||$ if $p$ has integer coefficients).

For technical reasons, we need a modified definition of a decision list, where the last $b_i$ is always equal to 0 instead of $ \pm 1 $. We call those decision lists \emph{modified}.

As mentioned in the introduction, one of the technical steps involves the pointwise approximation of Boolean functions. 
\begin{definition}
    Given a Boolean function $ f $: $ X \to \{0, 1\} $, we say that a polynomial $p$ is $ \varepsilon $-approximates the function $f$ iff $ \max_{x \in X} | f(x) - p(x) | \leq \varepsilon $. The $ \varepsilon $-approximate degree $ \deg_\varepsilon(f) $ is the minimum degree such that there exists a polynomial of degree $d$ that $\varepsilon$-approximates for the function $f$.
\end{definition}

Polynomial threshold functions may be viewed as a generalization of $ \varepsilon $-approximators: if $ p $ $\varepsilon$-approximates the function $f$ on $ X \subseteq \{0, 1\}^n $, then $ p(x) - \frac{1}{2} $ is a PTF that computes $f$ on $X$.

One of the commonly used families of polynomials for the Boolean function approximation is the Chebyshev polynomials of the first kind.
\begin{definition}
    The Chebyshev polynomials of the first kind are defined by the recurrence relation $ T_d(x) = 2xT_{d-1}(x) - T_{d-2}(x) $ with $ T_0(x) = 1 $ and $ T_1(x) = x $. 
\end{definition}

The solution of the recurrence above gives the following equation.
\begin{equation} \label{chebyshev}
    T_d(x) = \frac{1}{2} \bigg( \Big(x-\sqrt{x^2-1} \Big)^d + \Big(x+\sqrt{x^2-1} \Big)^d \bigg).
\end{equation}

It is obvious from the definition that $ \deg(T_d) = d $ and $ T_d $ has integer coefficients. We also need the following claims about the Chebyshev polynomials.
\begin{claim}[{\cite[Proposition 2.6]{Sherstov20}}] \label{chebyshev_upper}
    For every $ \delta > 0 $, $ T_d(1 + \delta) \geq 1 + d^2 \delta $.
\end{claim}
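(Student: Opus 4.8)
The plan is to reduce the claim to a one-line hyperbolic inequality via the substitution $1+\delta = \cosh t$, exploiting the closed form \eqref{chebyshev}. Since $\delta > 0$ we may write $1 + \delta = \cosh t$ for a unique $t > 0$. Plugging $x = \cosh t$ into \eqref{chebyshev} and using $\sqrt{x^2 - 1} = \sinh t$, the two bracketed terms become $\cosh t \mp \sinh t = e^{\mp t}$ raised to the power $d$, so that
\begin{equation*}
T_d(1+\delta) = T_d(\cosh t) = \tfrac12\bigl(e^{-td} + e^{td}\bigr) = \cosh(td).
\end{equation*}
This is the standard identity $T_d(\cosh t) = \cosh(td)$, and it turns the claim into a statement purely about hyperbolic functions.

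Next I would rewrite both sides of the desired inequality using the half-angle identity $\cosh u - 1 = 2\sinh^2(u/2)$. On the left, $T_d(1+\delta) - 1 = \cosh(td) - 1 = 2\sinh^2(td/2)$; on the right, $d^2\delta = d^2(\cosh t - 1) = 2d^2\sinh^2(t/2)$. Hence the claim $T_d(1+\delta) \ge 1 + d^2\delta$ is equivalent to $\sinh^2(td/2) \ge d^2\sinh^2(t/2)$, and since $t/2 > 0$ and $td/2 > 0$ make both hyperbolic sines positive, to
\begin{equation*}
\sinh\!\left(\tfrac{td}{2}\right) \ge d\,\sinh\!\left(\tfrac{t}{2}\right).
\end{equation*}

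Finally I would establish the clean inequality $\sinh(ds) \ge d\sinh(s)$ for all $s \ge 0$ and integers $d \ge 1$ (taking $s = t/2$). The quickest route is to compare power series: $\sinh(ds) = \sum_{k\ge 0} d^{2k+1} s^{2k+1}/(2k+1)!$ while $d\sinh(s) = \sum_{k\ge 0} d\, s^{2k+1}/(2k+1)!$, and $d^{2k+1}\ge d$ term by term since $d\ge 1$ and $s \ge 0$. (Equivalently, $\sinh$ is convex on $[0,\infty)$ with $\sinh 0 = 0$, so applying convexity to $s = \tfrac1d(ds) + (1-\tfrac1d)\cdot 0$ gives $\sinh(s) \le \tfrac1d\sinh(ds)$.) Chaining the three steps yields the bound. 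I do not expect a genuine obstacle here: the only point requiring care is that the substitution $1+\delta = \cosh t$ is legitimate precisely because $\delta > 0$ keeps the argument in the regime $x \ge 1$, where \eqref{chebyshev} admits the real hyperbolic interpretation; the degenerate case $d = 0$ is immediate since then both sides equal $1$.
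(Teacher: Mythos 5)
Your proof is correct, and it is worth noting that the paper itself gives no proof of this claim at all: it is imported verbatim as Proposition~2.6 of the cited Sherstov paper, so any complete argument is new relative to this text. Your chain of reductions is sound: for $\delta>0$ the substitution $1+\delta=\cosh t$ with $t>0$ is legitimate, equation \eqref{chebyshev} then gives the identity $T_d(\cosh t)=\tfrac12\left(e^{-td}+e^{td}\right)=\cosh(td)$, the half-angle identity $\cosh u - 1 = 2\sinh^2(u/2)$ converts the claim into $\sinh(td/2)\ge d\sinh(t/2)$, and the term-by-term power-series comparison (or the convexity of $\sinh$ on $[0,\infty)$ with $\sinh 0=0$) finishes it, with $d=0$ handled trivially. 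For comparison, the proof behind Sherstov's statement is the shorter calculus argument: $T_d(1)=1$ and $T_d'$ is nondecreasing on $[1,\infty)$ with $T_d'(1)=d^2$ (all critical points of $T_d$ lie in $[-1,1]$), so $T_d(1+\delta)\ge T_d(1)+d^2\delta$ by the mean value theorem. What your approach buys is self-containment: it needs nothing beyond the closed form \eqref{chebyshev} already displayed in the paper and elementary hyperbolic identities, whereas the derivative argument requires knowing the location of the extrema of $T_d$ or the formula $T_d'(1)=d^2$, neither of which the paper states. The one point to keep explicit if this were inserted into the paper is that $d$ is a nonnegative integer (so that $d^{2k+1}\ge d$ holds), which is indeed the only regime in which $T_d$ is defined here.
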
 
\begin{claim} \label{chebyshev_lower}
    For every $ \delta \in [0; 2] $, $ T_d(1 + \delta) \leq (1 + \sqrt{\delta})^{2d} $.
\end{claim}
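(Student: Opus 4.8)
The plan is to use the closed form \eqref{chebyshev} directly. Writing $x = 1 + \delta$, the quantity under the root simplifies nicely: since $x \geq 1$ for $\delta \geq 0$, we have $x^2 - 1 = (1+\delta)^2 - 1 = \delta(2 + \delta) \geq 0$, so the two conjugate factors $a = x + \sqrt{x^2-1}$ and $b = x - \sqrt{x^2-1}$ are real, nonnegative, and satisfy $ab = x^2 - (x^2-1) = 1$. In particular $a \geq 1 \geq b \geq 0$, hence $b^d \leq a^d$ and
\[
T_d(1+\delta) = \frac{1}{2}\bigl(a^d + b^d\bigr) \leq a^d = \Bigl(1 + \delta + \sqrt{\delta(2+\delta)}\Bigr)^d .
\]

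It then remains to compare the base of this bound with the target base. Since $(1 + \sqrt{\delta})^{2d} = \bigl((1+\sqrt{\delta})^2\bigr)^d = (1 + 2\sqrt{\delta} + \delta)^d$, it suffices to establish the pointwise inequality
\[
1 + \delta + \sqrt{\delta(2+\delta)} \leq 1 + 2\sqrt{\delta} + \delta ,
\]
that is, $\sqrt{\delta(2+\delta)} \leq 2\sqrt{\delta}$. For $\delta \geq 0$ this is equivalent, after squaring, to $\delta(2+\delta) \leq 4\delta$, i.e.\ $\delta(\delta - 2) \leq 0$, which holds precisely when $\delta \in [0,2]$. Raising both nonnegative sides of the base inequality to the power $d$ then yields the claim.

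There is no genuine obstacle here; the only point to watch is that the restriction $\delta \leq 2$ is exactly what validates the final comparison. It is the threshold at which $\sqrt{2+\delta}$ ceases to be bounded by $2$, so the hypothesis $\delta \in [0,2]$ is used in an essential way and cannot be dropped. The degenerate cases ($d = 0$, or $\delta = 0$) are covered trivially by the same chain of inequalities.
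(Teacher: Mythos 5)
Your proof is correct and follows essentially the same route as the paper: substitute $x = 1+\delta$ into the closed form \eqref{chebyshev}, bound the average of the two conjugate terms by the larger one, and use $\sqrt{\delta(2+\delta)} \leq 2\sqrt{\delta}$ (valid exactly for $\delta \in [0,2]$) to reach $(1+\sqrt{\delta})^{2d}$. The only difference is cosmetic: you justify the step $\tfrac{1}{2}(a^d+b^d) \leq a^d$ explicitly via $ab = 1$ and $a \geq 1 \geq b \geq 0$, which the paper leaves implicit.
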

\begin{proof}
    Note that $ \sqrt{(1 + \delta)^2 - 1} = \sqrt{\delta(\delta + 2)} \leq 2\sqrt{\delta} $ for $ 0 \leq \delta \leq 2 $. By substituting $ x = 1 + \delta $ into \eqref{chebyshev}, we get
    \[ T_d(1 + \delta) = \frac{1}{2} \bigg( \Big( 1 + \delta - \sqrt{(1 + \delta)^2 - 1} \Big)^d + \Big( 1 + \delta + \sqrt{(1 + \delta)^2 - 1} \Big)^d \bigg) \leq \]
    \[ \leq \Big( 1 + \delta + \sqrt{(1 + \delta)^2 - 1} \Big)^d \leq (1 + \delta + 2\sqrt{\delta})^d = (1 + \sqrt{\delta})^{2d}. \]
\end{proof}
\begin{claim}[{{\cite[Section 2.2, Property 5]{DBLP:conf/icalp/BogdanovW17}}}] \label{chebyshev_weight}
    $|||T_d||| \leq 2^{2d}$.
\end{claim}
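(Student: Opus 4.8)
The plan is to derive everything from the defining recurrence $T_d = 2xT_{d-1} - T_{d-2}$ together with two elementary properties of the weight functional $|||\cdot|||$, which is simply the $\ell_1$-norm of the coefficient vector. Namely, I would use the triangle inequality $|||p + q||| \leq |||p||| + |||q|||$ and the fact that multiplication by $x$ preserves the weight, $|||x \cdot p||| = |||p|||$; the latter holds because multiplying by $x$ shifts every coefficient up one degree, introducing neither cancellation nor any change of magnitudes. Both facts are immediate from the definition of $|||\cdot|||$.

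Writing $w_d := |||T_d|||$ and applying these two properties to the recurrence, I would obtain
\[
w_d = |||2x T_{d-1} - T_{d-2}||| \leq 2\,|||x T_{d-1}||| + |||T_{d-2}||| = 2w_{d-1} + w_{d-2},
\]
with base values $w_0 = |||1||| = 1$ and $w_1 = |||x||| = 1$. It then remains only to solve this recursive inequality.

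I would prove $w_d \leq 2^{2d} = 4^d$ by induction on $d$. The base cases $d = 0$ and $d = 1$ hold since $1 \leq 1$ and $1 \leq 4$. For the inductive step, assuming $w_{d-1} \leq 4^{d-1}$ and $w_{d-2} \leq 4^{d-2}$, the displayed inequality gives
\[
w_d \leq 2 \cdot 4^{d-1} + 4^{d-2} = 9 \cdot 4^{d-2} \leq 16 \cdot 4^{d-2} = 4^d,
\]
completing the induction.

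I do not expect any genuine obstacle here. The only point requiring care is that cancellation among coefficients of opposite sign could in principle only help, never hurt, an upper bound on the weight; this is handled automatically by the triangle inequality, so no explicit sign analysis of the Chebyshev coefficients is needed. The comfortable slack $9 \leq 16$ in the inductive step further confirms that the bound is far from tight and leaves ample room.
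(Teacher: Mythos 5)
Your proposal is correct, and it necessarily differs from the paper, because the paper does not prove this claim at all: it imports the bound wholesale by citation from Bogdanov and Williamson. Your argument is a sound, self-contained replacement. The two norm facts you invoke are indeed immediate for the $\ell_1$ coefficient norm ($|||p+q||| \leq |||p||| + |||q|||$, and $|||x \cdot p||| = |||p|||$ since multiplication by $x$ merely shifts the coefficient vector), the recurrence inequality $w_d \leq 2w_{d-1} + w_{d-2}$ follows, and your induction closes with slack. What your route buys beyond making the claim self-contained: the same recurrence inequality actually yields the sharper bound $w_d \leq (1+\sqrt{2})^d$, since the inductive step gives
\[
w_d \leq 2(1+\sqrt{2})^{d-1} + (1+\sqrt{2})^{d-2} = (3+2\sqrt{2})(1+\sqrt{2})^{d-2} = (1+\sqrt{2})^d,
\]
so the stated $2^{2d}$ is far from tight, confirming quantitatively the slack you observed in the $9 \leq 16$ step. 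For the paper's purposes the weaker constant is irrelevant (it only affects constants inside $2^{O(\sqrt{n})}$ in \cref{univariate}), but if you wanted the cleanest statement, your method gives it for free.
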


Klivans and Servedio used the Expanded-Winnow algorithm \cite[Theorem 2]{DBLP:journals/jmlr/KlivansS06} to derive the bounds on learning the decision lists class in the attribute-efficient learning model. It essentially runs the Winnow algorithm \cite[Algorithm 4]{DBLP:journals/ml/Littlestone87} on a set of all possible monomials of degree up to $d$. We note that the Winnow algorithm can learn Boolean functions on any domain $ X \subseteq \{0, 1\}^n $, and therefore we can apply the same bounds from the Expanded-Winnow to learn the decision lists on $ \{0, 1\}^n_{\leq k} $. Thus, the next theorem implicitly follows from the Expanded-Winnow algorithm.
\begin{theorem} \label{learning}
    Let $C$ be a class of Boolean functions over $S \subseteq \{0,1\}^n$ with the property that each $f \in C$ has a PTF of degree at most $d$ and weight at most $W$. Then there is an online learning algorithm for $C$ which runs in $n^d$ time per example and has mistake bound $ O(W \cdot d^2 \cdot \log{n})$.
\end{theorem}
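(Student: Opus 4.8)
The plan is to obtain the theorem as a direct reduction to the mistake bound of the Winnow algorithm~\cite{DBLP:journals/ml/Littlestone87}, following the Expanded-Winnow construction of Klivans and Servedio~\cite{DBLP:journals/jmlr/KlivansS06}. The one point that needs attention -- and the reason the statement is phrased for an arbitrary domain $S \subseteq \{0,1\}^n$ rather than the full cube -- is that every step of that construction must be checked to survive restriction to $S$, since the intended applications are to Hamming balls $\{0,1\}^n_{\leq k}$.

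First I would introduce the feature expansion. Each input $x \in S$ is mapped to the vector $\phi(x) \in \{0,1\}^N$ whose coordinates record the values of all monomials of degree at most $d$, so that $N = \sum_{i=0}^{d}\binom{n}{i} = O(n^d)$. Because the variables are Boolean, every monomial is genuinely $\{0,1\}$-valued, so $\phi(x)$ is a Boolean vector, and computing it takes $n^{O(d)}$ time per example, which yields the claimed $n^d$ running time. Under $\phi$ a degree-$d$ weight-$W$ PTF $f(x)=\sign p(x)$, with $p=\sum_m c_m m$ and $|||p|||=W$, becomes a linear threshold function $\sign\langle c,\phi(x)\rangle$ over the expanded features carrying the same total weight $W$. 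Since $p$ has integer coefficients and $\phi(x)$ is integer-valued, $p(x)$ is an integer of absolute value at least $1$ wherever $f$ is defined, so this linear threshold function separates the (expanded) examples with unit margin.

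Next I would invoke the Winnow mistake bound on this expanded instance. After the routine reductions that fit it to Winnow's template -- doubling the coordinates to absorb negative coefficients and folding the threshold into a constant feature -- Winnow's guarantee, which is controlled by the weight $W$ of the target linear threshold function and by the logarithm of the number $N$ of features, specializes via $\log N = O(d\log n)$ to the claimed $O(W\cdot d^2\cdot\log n)$. The decisive observation is that this entire analysis depends only on the expanded feature vectors $\phi(x)\in\{0,1\}^N$ and on the fixed target weight vector $c$; it never uses that $x$ ranges over all of $\{0,1\}^n$. Consequently, restricting the examples to any $S\subseteq\{0,1\}^n$, and in particular to $\{0,1\}^n_{\leq k}$, leaves both the hypothesis (still a weight-$W$ degree-$d$ PTF on $S$) and the mistake bound intact.

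I do not expect a genuine obstacle here; the work is bookkeeping. The care lies in matching the PTF representation precisely to the hypotheses of the Winnow guarantee -- non-negativity of the maintained weights, the placement of the threshold, and the unit separating margin -- so that the dependence on $W$, $d$, and $\log n$ comes out exactly as stated, and then in confirming, which is immediate from the structure of the argument, that none of these quantities degrade when the domain is shrunk from $\{0,1\}^n$ to $S$.
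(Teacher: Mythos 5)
Your proposal follows the same route as the paper, which in fact offers no self-contained proof of this theorem: it justifies it by citing the Expanded-Winnow algorithm of Klivans and Servedio \cite{DBLP:journals/jmlr/KlivansS06} (Winnow run over all $N = O(n^d)$ monomials of degree at most $d$) together with exactly the observation you emphasize, namely that Winnow's analysis only ever touches the expanded feature vectors and the fixed target weight vector, hence survives restricting the examples to any $S \subseteq \{0,1\}^n$, in particular to Hamming balls. Your bookkeeping --- the map $\phi$, the unit margin coming from integrality of the coefficients, the doubling trick for negative coefficients, and $\log N = O(d\log n)$ --- is precisely what that citation compresses, so in structure your argument and the paper's agree.

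The one step that does not check out, and it is worth flagging even though the paper shares it, is the final specialization. The Winnow guarantee for a linear threshold function of total weight $W$ with unit separation over $N$ Boolean features is \emph{quadratic} in the weight: after renormalizing the margin (whose inverse is itself of order $W$) one gets a mistake bound of $O(W^2 \log N)$, which here reads $O(W^2 \cdot d \cdot \log n)$ --- and this is indeed the bound stated in Theorem 2 of Klivans and Servedio. Neither your sketch nor any multiplicative-weights analysis produces a $d^2$ factor while keeping the dependence on $W$ linear, so what your argument actually proves is a mistake bound of $O(W^2 d \log n)$, not the literal $O(W \cdot d^2 \cdot \log n)$ of the statement; the two are incomparable in general, and the statement's form looks like a transposition of the squares in the cited theorem. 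For every use made of the theorem in the paper this is immaterial, since $W$ is exponential in all applications and the squaring is absorbed into the constant in the exponent, but a careful write-up should either state the $O(W^2 d \log n)$ bound or supply a separate argument for linear dependence on $W$, which the Winnow route cannot give.
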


\section{Decision lists on the Boolean cube} \label{full}
In this section, we prove the upper bound for representing decision lists as a PTF over the Boolean cube. Our proof can be viewed as an improvement of the proof of the following theorem by Klivans and Servedio. 
\begin{theorem}[{\cite[Theorem 7]{DBLP:journals/jmlr/KlivansS06}}]
    Let $ L $ be a decision list of length $n$ on $ \{0, 1\}^n $. Then for any $ h < n $, $L$ is computed by a PTF of degree $ O(\sqrt{h}\log{h}) $ and weight $ 2^{O(n/h + \sqrt{h}\log^2{h})} $.
\end{theorem}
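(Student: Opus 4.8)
The plan is to follow the two-step strategy announced in the techniques paragraph: split $L$ into short sublists, approximate each one pointwise, and recombine them with geometrically growing weights. First I would pass to a \emph{modified} decision list, so that the default value is $0$ rather than $\pm 1$; this guarantees that a block all of whose literals are false will contribute nothing. I would then cut the list $(\ell_1,b_1),\dots,(\ell_n,b_n)$ into $m=\lceil n/h\rceil$ consecutive blocks $B_1,\dots,B_m$, each a sublist of length at most $h$. The value of $L$ is governed by the first block containing a satisfied literal, and inside that block by the first satisfied literal, while every earlier block is entirely false. The two parameters to balance will be the per-block \emph{degree}, which I want at $O(\sqrt{h}\log h)$, and the per-block \emph{weight}, which I want at $2^{O(\sqrt h\log^2 h)}$.

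For a single block write $\sigma_i=\tilde\ell_1+\dots+\tilde\ell_i$ for the number of true literals among its first $i$ positions. The exact sublist value is $\sum_i b_i\,\tilde\ell_i\,[\sigma_{i-1}=0]$, and since $\tilde\ell_i[\sigma_{i-1}=0]=[\sigma_{i-1}=0]-[\sigma_i=0]$ it telescopes to $\sum_i b_i([\sigma_{i-1}=0]-[\sigma_i=0])$. The heart of the argument is to replace the indicator $t\mapsto[t=0]$ on $\{0,1,\dots,h\}$ by a univariate polynomial $A$ of low degree. I would take a rescaled Chebyshev polynomial, $A(t)=T_d((2t-(h+1))/(h-1))/T_d(1+\delta)$ with $\delta=\Theta(1/h)$, so that $\{1,\dots,h\}$ is sent into $[-1,1]$ and $0$ to $-(1+\delta)$. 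Then $A(0)=1$ and $|A(t)|\le 1/T_d(1+\delta)$ for $t\in\{1,\dots,h\}$, and the lower bound $T_d(1+\delta)\ge\frac{1}{2}(1+\sqrt{2\delta})^d$ that follows from \eqref{chebyshev} gives $|A(t)|\le\varepsilon$ once $d=\Theta(\sqrt h\,\log(1/\varepsilon))$. Substituting $A$ for each indicator yields a block polynomial $p_{B_k}$ with two decisive features: on an all-false block every $\sigma_i$ equals $0$, so the telescoped differences vanish and $p_{B_k}\equiv 0$ pointwise; and on the (first) active block the monotone transition of $\sigma$ forces $p_{B_k}=b_{i^*}(1\pm\varepsilon)\pm O(h\varepsilon)$, i.e. the correct bit up to an error accumulated over the $h$ telescoped terms.

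I would then set $p(x)=\sum_{k=1}^m \lambda^{m-k}p_{B_k}(x)$ for a suitable constant $\lambda>1$. Blocks before the active one are exactly $0$, each remaining block contributes only $O(1)$ in magnitude, and choosing $\lambda$ a large enough constant makes the active term $\lambda^{m-k^*}b_{i^*}$ dominate the geometric tail of the later blocks; taking $\varepsilon=\Theta(1/h)$ keeps the intra-block error below $\tfrac12$, so $\sign p(x)=L(x)$. With $\varepsilon=\Theta(1/h)$ we get $\log(1/\varepsilon)=O(\log h)$ and hence degree $O(\sqrt h\log h)$, as claimed. For the weight, $|||T_d|||\le 2^{2d}$ by Claim~\ref{chebyshev_weight}, but each $A(\sigma_i)$ is obtained by substituting the weight-$O(h)$ form $\sigma_i$ into a degree-$d$ polynomial, which inflates the weight to $h^{O(d)}=2^{O(\sqrt h\log^2 h)}$ per block; the geometric coefficients across the $m$ blocks add a factor $\lambda^{O(m)}=2^{O(n/h)}$, and clearing the denominator $T_d(1+\delta)$ to obtain integer coefficients changes the weight only by a comparable factor. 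Multiplying these yields the target weight $2^{O(n/h+\sqrt h\log^2 h)}$.

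The main obstacle is the within-block approximation, where three requirements must be met simultaneously: the block polynomial must (i) vanish pointwise on an all-false block so that earlier blocks never interfere, (ii) recover the correct bit with error small enough to survive both the sum of $h$ telescoped terms and the inter-block noise, and (iii) achieve this at degree $O(\sqrt h\log h)$ and weight $2^{O(\sqrt h\log^2 h)}$. Calibrating $\varepsilon$ against the telescoping length $h$ and against the Chebyshev degree/weight tradeoff encoded in Claims~\ref{chebyshev_upper}--\ref{chebyshev_weight} is the delicate step, and it is precisely where the extra logarithmic factors creep in; this is exactly the place where a sharper pointwise approximation (in the spirit of Sherstov~\cite{Sherstov20}) should later let one remove them.
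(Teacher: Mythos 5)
Your proposal is correct and follows essentially the same route as the paper's (i.e., Klivans--Servedio's) construction that the paper outlines and cites: an outer decomposition into length-$h$ blocks recombined with geometrically growing weights, plus an inner Chebyshev-based approximation of the all-false indicators to accuracy $\Theta(1/h)$ (your telescoped $[\sigma_{i-1}=0]$ terms are exactly their conjunctions $\bigwedge_{j<i}\overline{\ell_j}$), yielding degree $O(\sqrt{h}\log h)$ and weight $2^{O(n/h+\sqrt{h}\log^2 h)}$. One small nit: with your rescaling $A(0)=(-1)^d T_d(1+\delta)/T_d(1+\delta)=(-1)^d$, not $1$, so take $d$ even (or flip the sign of the normalizer); this does not affect the asymptotics.
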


We start with the outline of their construction and then proceed to prove the missing pieces to tighten up the upper bound. This construction consists of two parts. The ``outer'' part is the decomposition of the original decision list into the sum of modified sublists, i.e. we represent the $ L(x) $ as the $ L(x) = \sum_{i}^{n/h} a_i L_i(x) $, where each sublist is responsible for the independent block of size roughly $ h $. The ``inner'' part is the pointwise approximation of the sublist. To achieve a close enough approximation, Klivans and Servedio represent each of the sublists as a sum of conjunctions and approximate each of the conjunctions separately. The degree-weight traidoff is adjusted by the parameter $h$.

The problem with this approach is that in order to approximate the sublist, one has to very closely approximate the inner conjunctions, i.e. with a precision of at least $ O(\frac{1}{h}) $. But it was proved in~\cite{DBLP:journals/qic/Wolf10} that such an approximation requires the degree of $ \Theta(\sqrt{h\log{h}}) $, and while we can improve on the logarithmic factor, we can not get rid of it completely. However, we can notice that the set of conjunctions we are dealing with is not entirely random; in fact, every two successive conjunctions share most of their literals. That means we can adjust our approximator so it would increase it's precision as more terms in conjunction are set to zero. Luckily for us, it was proved in \cite{Sherstov20} that such an adjustment can be done with the same degree as in the regular approximator. In the next theorem, we adapt it for our needs and also prove the weight bound for it.

\begin{theorem} \label{univariate}
    For every $ d = \Theta(1)$ and every $\varepsilon = \Theta(1) $ there exists an univariate polynomial $ P_{n,d,\varepsilon}(x) $ such that $ \deg(P_{n,d,\varepsilon}) = O(\sqrt{n}) $, $ |||P_{n,d,\varepsilon}||| = 2^{O(\sqrt{n})} $ and
    \[ \forall t \in [0; 1] \quad |P_{n,d,\varepsilon}(t) - 1| \leq \varepsilon, \]
    \[ \forall t \in (2; n] \quad |P_{n,d,\varepsilon}(t)| \leq \frac{\varepsilon}{t^d}. \]
    
    Moreover, there exists a constant $ C = n^{O(\sqrt{n})} $ such that $ C \cdot P_{n,d,\varepsilon}(x) $ has integer coefficients.
\end{theorem}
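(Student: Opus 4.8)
The plan is to realize $P_{n,d,\varepsilon}$ as a normalized combination of Chebyshev polynomials composed with affine maps, and then to control the approximation quality through Claims~\ref{chebyshev_upper} and~\ref{chebyshev_lower} and the coefficient sizes through Claim~\ref{chebyshev_weight}. The guiding mechanism is that $T_m$ with $m=\Theta(\sqrt n)$ is bounded by $1$ in absolute value on $[-1,1]$ but, by Claim~\ref{chebyshev_upper}, already satisfies $T_m(1+\delta)\geq 1+m^2\delta$ just outside: so if an affine map $w$ sends the ``far'' set into $[-1,1]$ while sending the ``near'' set slightly outside it, the normalized quantity $T_m(w(t))/T_m(w(0))$ is close to $1$ near $t=0$ and small on the far set. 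The ratio on the near set is kept within $1\pm\varepsilon$ by using Claim~\ref{chebyshev_lower} to upper bound how fast $T_m$ can grow across the tiny window that $[0,1]$ occupies, while the smallness on the far set comes from the lower bound on the normalizer $T_m(w(0))$.

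The delicate point, and what I expect to be the main obstacle, is obtaining simultaneously the flat top on $[0,1]$ \emph{and} the pointwise decay $\varepsilon/t^d$ on $(2,n]$ at degree only $O(\sqrt n)$. A single affine map forces the near and far regions to be separated by only $\Theta(1/n)$ in the rescaled variable, which yields merely a uniform constant bound on $(2,n]$; this is hopelessly weak at $t\approx n$, where $\varepsilon/t^d$ demands value $\varepsilon/n^{d}$, and pushing the uniform bound that low would cost an extra $\log n$ factor in the degree. The resolution is to match Chebyshev's natural growth profile to the required decay: I would take a product of such normalized factors across the geometric scales $2^{j}$, $j=1,\dots,\lceil\log_2 n\rceil$, where the $j$-th factor is flat on $[0,2^{j-1}]$ and contracts by a fixed constant on $[2^{j},n]$. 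Each factor then needs degree only $O(\sqrt{n/2^{j}})$, so the degrees form a geometric series summing to $O(\sqrt n)$, and the per-scale constant contractions compound into the $t^{-d}$ profile, with the number of triggered factors at a point $t$ being $\Theta(\log t)$. Verifying that the flat parts multiply to something within $1\pm\varepsilon$ (rather than drifting by a polynomial factor over $\log n$ scales) is where Claim~\ref{chebyshev_lower} must be applied carefully to each factor.

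For the weight bound I would use that $|||\cdot|||$ is submultiplicative and behaves well under affine substitution: writing $T_m(y)=\sum_k c_k y^k$ with $\sum_k|c_k|=|||T_m|||\leq 2^{2m}$ by Claim~\ref{chebyshev_weight}, one gets $|||T_m(at+b)|||\leq |||T_m|||\cdot(|a|+|b|)^{m}\leq 2^{2m}(|a|+|b|)^{m}$. The crucial check is that every affine map in the construction has $|a|+|b|=O(1)$ — here $a=\Theta(1/n)$ and $b=\Theta(1)$ — so that this does not blow up to $n^{\Theta(\sqrt n)}$; granting that, each factor has weight $2^{O(\sqrt n)}$, dividing by the normalizer $T_m(w(0))\geq 1$ only helps, and multiplying $O(\log n)$ factors keeps the total weight $2^{O(\sum_j m_j)}=2^{O(\sqrt n)}$.

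Finally, for the integral scaling constant I would track where denominators arise. The coefficients of $P_{n,d,\varepsilon}$ are rational, with denominators produced by the powers $a^{k}$ ($k\leq m$, $a$ a rational of denominator $\Theta(n)$), contributing at most $n^{m}=n^{O(\sqrt n)}$ per factor, and by the rational normalizers $T_m(w(0))$ evaluated at rational points, whose denominators are again at most the $m$-th power of the point's denominator. Multiplying the contributions of all $O(\log n)$ factors, the least common denominator is bounded by $n^{O(\sum_j m_j)}=n^{O(\sqrt n)}$, so there is a constant $C=n^{O(\sqrt n)}$ for which $C\cdot P_{n,d,\varepsilon}$ has integer coefficients. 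The only subtlety is bookkeeping: one must confirm that the normalizers do not contribute denominators exceeding $n^{O(\sqrt n)}$, which follows from the same bound on the denominator of $w(0)$ and the integrality of $T_m$.
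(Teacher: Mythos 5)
Your weight and integer-denominator bookkeeping is sound (and essentially parallels what the paper does), but the construction itself has a genuine gap, located exactly at the point you flag as delicate: the multi-scale product cannot deliver the flat top $|P(t)-1|\leq\varepsilon$ on $[0,1]$, and no careful application of Claim~\ref{chebyshev_lower} rescues it. Concretely, consider your innermost factor $F_1(t)=T_{m_1}(w_1(t))/T_{m_1}(w_1(0))$, where $w_1$ maps $[2,n]$ onto $[-1,1]$, so $w_1(0)-1\approx 4/n$ and $w_1(1)-1\approx 2/n$. Since at $t=2^{+}$ your scale-$j$ factors with $j\geq 2$ sit in their flat regions, the required bound $|P(t)|\leq\varepsilon/t^{d}$ forces the normalizers to satisfy (in total, and essentially for $F_1$ alone) $T_{m_1}(w_1(0))\geq 2^{d}/\varepsilon$. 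On the other hand, writing $T_m(1+\delta)=\cosh\bigl(m\,\mathrm{arccosh}(1+\delta)\bigr)$ with $\mathrm{arccosh}(1+\delta)=(1+o(1))\sqrt{2\delta}$, and using that $\ln\cosh(u)\leq\tfrac{1}{\sqrt{2}}\ln\cosh(\sqrt{2}\,u)$ for all $u>0$, the fact that $w_1(0)-1$ is twice $w_1(1)-1$ gives $\ln T_{m_1}(w_1(1))\leq(2^{-1/2}+o(1))\,\ln T_{m_1}(w_1(0))$, hence $F_1(1)\leq T_{m_1}(w_1(0))^{-(0.29-o(1))}\leq(\varepsilon 2^{-d})^{0.29-o(1)}$. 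Since every other factor is at most $1$ on $[0,1]$, your product at $t=1$ is at most about $0.18$ for $d=2$, $\varepsilon=1/100$ --- bounded away from $1$ by a fixed constant, violating the first displayed inequality. The obstruction is structural, not a matter of constants: for any affine-composed normalized Chebyshev factor, the logarithmic drop across $[0,1]$ is at least a universal constant fraction (roughly $1-1/\sqrt{2}$) of its logarithmic contraction at $t=2^{+}$, so raising or lowering the degrees $m_j$, re-centering the normalization, or spreading the contraction over several scales all fail identically: the total drift on $[0,1]$ is always at least a constant times $\ln(2^{d}/\varepsilon)$.

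What is missing is an amplification step that decouples the two requirements, and this is precisely the ingredient present in the construction the paper relies on. The paper does not build $P_{n,d,\varepsilon}$ at all: it quotes \cite[Theorem 3.7]{Sherstov20}, whose polynomial $p_1(x)^{d}p_2(p_1(x))p_3(x)$ contains, besides the normalized Chebyshev part $p_1$ (which, like your product, is only within a \emph{constant factor} of $1$ on the near region), the correction factor $p_2(t)=\sum_{i\leq D}\binom{i+d-1}{i}(1-t)^{i}$ --- a truncated Taylor series of $t^{-d}$ --- so that $p_1^{d}p_2(p_1)$ becomes $\varepsilon$-close to $1$ wherever $p_1$ is merely bounded between two constants, while staying small where $p_1$ is small; the paper's own contribution is then only the weight bound $2^{O(\sqrt{n})}$ and the denominator bound $C=n^{O(\sqrt{n})}$, i.e.\ the part of your proposal that does work. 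Your construction could likely be repaired in the same spirit (e.g.\ compose your product $Q$ with $s\mapsto 1-(1-s)^{D}$ for a constant $D=O(d+\log(1/\varepsilon))$ and strengthen the per-scale contraction to absorb the resulting constant in the tail bound), but as written the product does not satisfy the theorem's approximation guarantees.
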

\begin{proof}
    We can get such a polynomial from \cite[Theorem 3.7]{Sherstov20} (in fact, our statement is much more limited than the original one) with a proven degree bound but with no weight estimate. So it suffices for us to prove that the claimed weight bound holds as well.
    
    The polynomial $ P_{n,d,\varepsilon}(x) $ from \cite[Theorem 3.7]{Sherstov20} is given as $ P_{n,d,\varepsilon}(x) = p_1(x)^dp_2(p_1(x))p_3(x) $, where 
    \[ p_1(t) = \frac{T_{d_1 + 1}\left( 1 - 2\frac{t - 1}{n - 1} \right)}{T_{d_1 + 1}\left( \frac{n + 1}{n - 1} \right)} \]
    \[ p_2(t) = \sum_{i = 0}^D \binom{i + d - 1}{i} (1 - t)^i  \]
    \[ p_3(t) = B_{d_3}\left( \frac{1}{e^7} T_{\sqrt{n}}\left( 1 + \frac{2 - t}{n} \right) \right), \quad B_{d_3}(t) = \sum_{i = \lceil 2.5e^{-7}d_3 \rceil}^{d_3} \binom{d_3}{i} t^i(1 - t)^i \]
    with $d_1 = \lfloor \sqrt{2(n - 1)} \rfloor, \ D = O\left( d + \log{\frac{1}{\varepsilon}} \right)$ and $ d_3 = O\left(\log{\frac{1}{\varepsilon}}\right) $.
    
    To start with,
    \[ T_{d_1 + 1}\left( \frac{n + 1}{n - 1} \right) \leq \left( 1 + \sqrt{\frac{2}{n - 1}} \right)^{2\sqrt{2(n - 1)}} = \left( 1 + \frac{2}{\sqrt{2(n - 1)}} \right)^{2\sqrt{2(n - 1)}} \leq e^4, \]
    where the first inequality is \cref{chebyshev_upper} and the last one is $ (1 + \frac{a}{x})^x < e^a $ for $ a > 0 $ and $ x > 1 $. On the other hand, by \cref{chebyshev_lower}
    \[ T_{d_1 + 1}\left( \frac{n + 1}{n - 1} \right) \geq 1 + 2(n - 1) \cdot \frac{2}{n - 1} = 5, \]
    so the scaling factor in $ p_1(t) $ is bounded by constants, and by \cref{chebyshev_weight} $|||p_1||| = 2^{O(\sqrt{n})}$. By multiplying $ p_1 $ by $ (n - 1)^{d_1 + 1} $, we clear the denominators from $ 1 - 2\frac{t - 1}{n - 1} $, and thus it would have integer coefficients. The same holds for $ T_{\sqrt{n}}\left( 1 + \frac{2 - t}{n} \right) $ and $ n^{\sqrt{n}} $. As for the $T_{d_1 + 1}\left( \frac{n + 1}{n - 1} \right)$ in $p_1$, we know that it is a rational number $ \frac{a}{b} $ with $ b \leq (n - 1)^{d_1 + 1} $, which is bounded by constants. Therefore, $ a = O((n - 1)^{d_1 + 1}) $, and to clear all the denominators from $ P_{n,d,\varepsilon}(x) $ it suffices to multiply it by $ C = n^{O(\sqrt{n})} $.
    
    Next, both $ p_2(t) $ and $ B_{d_3}(t) $ have integer and independent of $n$ coefficients, as well as independent of $n$ degrees, so their degrees and weights are constant. Finally, the product of a constant number of terms of degree $ O(\sqrt{n}) $ and weight $ n^{O(\sqrt{n})} $ has the same degree and weight bounds, and the overall weight bound holds. The bound for the constant $C$ holds for the same reasons.
\end{proof}

With the new approximator, we are ready to proceed to the main proof of this section.
\begin{theorem} \label{approximate}
    Let $ L $ be a modified decision list of length $h$ on $ \{0, 1\}^n $. Then for every $ \varepsilon = \Theta(1) $ $L$ can be $ \varepsilon $-approximated by a polynomial of degree $ O(\sqrt{h}) $ and weight $ n^{O(\sqrt{h})} $. Moreover, $ p(0^n) = 0 $.
\end{theorem}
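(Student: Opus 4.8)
The plan is to start from the exact polynomial form of the modified list and replace its high-degree prefix-indicators by a univariate reshaping of a single linear form, to which \cref{univariate} applies. Writing $u_i(x) = \sum_{j<i}\tilde{\ell}_j$ for the number of literals among $\ell_1,\dots,\ell_{i-1}$ satisfied by $x$, the firing condition for index $i$ is ``$\ell_i$ true and $u_i=0$'', so $L(x) = \sum_{i=1}^h b_i\,\tilde{\ell}_i\,[u_i=0]$ holds exactly. Since $[u_i=0]$ is the indicator that the integer $u_i\in\{0,1,\dots,h-1\}$ equals $0$, I would approximate it by $P_{3h,d,\varepsilon'}(3u_i)$ for a constant degree parameter $d\geq 2$ and a constant $\varepsilon'$ to be fixed, and set
\[ p(x) = \sum_{i=1}^h b_i\,\tilde{\ell}_i\,P_{3h,d,\varepsilon'}\!\left(3u_i(x)\right). \]
The scaling by $3$ sends $u_i=0$ into $[0;1]$, where $P\approx 1$, and every $u_i\geq 1$ into $(2;3h]$, where the decay bound $|P(t)|\leq \varepsilon'/t^d$ of \cref{univariate} is in force.

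For the error bound, I would fix $x$ and let $i^\ast$ be the first satisfied literal (if none exists, all $\tilde{\ell}_i=0$, so $p(x)=0=L(x)$). Every term with $i<i^\ast$ vanishes because $\tilde{\ell}_i=0$; the term $i=i^\ast$ has $u_{i^\ast}=0$, hence $P(0)\in[1-\varepsilon',1+\varepsilon']$ and contributes $b_{i^\ast}$ up to error $\varepsilon'$, matching $L(x)=b_{i^\ast}$. The delicate part is the tail $i>i^\ast$: only the satisfied literals contribute there, and the $k$-th satisfied literal after $i^\ast$ has exactly $u_i=k$ true literals before it, so the decay bound gives total tail error at most $\sum_{k\geq 1}\varepsilon'/(3k)^d$, a convergent series of value $O(\varepsilon')$ for $d\geq 2$. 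Thus $|L(x)-p(x)| = O(\varepsilon')$ uniformly, and taking $\varepsilon'$ a small constant multiple of $\varepsilon$ yields the required $\varepsilon$-approximation. This is precisely where the improvement over Klivans--Servedio lives, and I expect it to be \textbf{the main obstacle to get right}: without the per-value decay one would have to make each of the $\Theta(h)$ tail terms individually $O(1/h)$-small, forcing approximation error $\Theta(1/h)$ and, by the lower bound of~\cite{DBLP:journals/qic/Wolf10}, an extra $\sqrt{\log h}$ factor in the degree.

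It remains to bound the complexity of $p$ and to secure $p(0^n)=0$. The degree is immediate, since $P_{3h,d,\varepsilon'}$ has degree $O(\sqrt{3h})=O(\sqrt h)$ and is composed with the linear form $3u_i$. For the weight I would use submultiplicativity of $|||\cdot|||$ under products together with $|||3u_i|||\leq 6h$: writing $P = \sum_{k\leq D} c_k t^k$ with $D=O(\sqrt h)$ and $\sum_k|c_k| = 2^{O(\sqrt h)}$, one gets $|||P(3u_i)||| \leq \sum_k |c_k|\,(6h)^k \leq 2^{O(\sqrt h)} (6h)^{O(\sqrt h)} = h^{O(\sqrt h)} \leq n^{O(\sqrt h)}$, and summing the $h$ terms (each carrying a literal factor of weight at most $2$) preserves $|||p|||=n^{O(\sqrt h)}$. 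Finally, every monomial of $p$ carries a factor $\tilde{\ell}_i$; under the normalization (supplied by the outer decomposition feeding this theorem) that $0^n$ is the fall-through input of the sublist, each such factor is a plain variable and vanishes at $0^n$, so $p(0^n)=0$ holds on the nose. I would flag that checking the consistency of this normalization with the tail analysis is the one piece of genuine bookkeeping, since an exact low-degree correction forcing $p(0^n)=0$ is otherwise unavailable.
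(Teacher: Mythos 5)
Your proposal is correct and follows essentially the same route as the paper: the paper likewise writes $L(x)=\sum_i b_i\,\tilde{\ell}_i\,[\text{no earlier literal fires}]$, composes the polynomial of \cref{univariate} with $A_i(x)=3\sum_{j<i}\tilde{\ell}_j$ (your $3u_i$), bounds the tail using the per-value decay $\varepsilon/t^d$ together with the convergence of $\sum_k 1/k^2$, and bounds the weight by the same composition inequality $|||P(q)|||\leq |||q|||^{\deg P}\cdot |||P|||$. Even your caveat about $p(0^n)=0$ mirrors the paper's treatment, which also derives it from the fact that every term carries a factor $\tilde{\ell}_i$ and hence vanishes on any input where no literal of the sublist is satisfied.
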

\begin{proof}
    Consider the decision list $ L = (\ell_1, b_1), \ldots (\ell_h, b_h), 0 $. It is straightforward to check that $ L $ can be expressed as $ L(x) = \sum_{i = 1}^h b_i \ell_i \bigwedge_{j = 1}^{h - 1} \overline{\ell_j} $, where the term corresponding to $ b_i $ is non-zero iff $ \ell_i $ is the first condition satisfied in $L$.
    
    For every $i$ put $ T_i(x) = \bigwedge_{j = 1}^{h - 1} \overline{\ell_j(x)} $ and $A_i(x) = 3(i - 1) - 3\tilde{\ell_1} - \ldots - 3\tilde{\ell}_{i - 1} $. Notice that $ A_i(x) = 0 \Lra T_i(x) = 1 $ and $ A_i(x) \geq 3 $ otherwise. Moreover, for every $j < i$ such that $ \ell_j = 1 $ the value of $ A_i(x) $ increases by 3. By corollary, for every $ i < j $ we have $ \ell_i = 1 \Ra A_i(x) + 3 \leq A_j(x) $ because $ A_j $ consists of every zero literal of $ T_i $ and $ \overline{\ell_i} = 0 $.
    
    Now consider $ p_i(x) = P_{3h, 2, \varepsilon/2}(A_i(x)) $ where $ P_{3h, 2, \varepsilon/2} $ is from \cref{univariate}. We will prove that $ p(x) = \sum_{i = 1}^h b_i \tilde{\ell_i} p_i(x) $ is the desired polynomial. To start with, if $ \ell_i = 0 $ then the corresponding term is also equals to zero; in particular, $ p(0^n) = 0 $. Next, let $ j_1 < \ldots < j_m $ be the set of all indexes such that $ \ell_{j_i} = 1 $. For $j_1$ we have $ A_{j_1}(x) = 0 $, and so $ \sign(b_{j_1} \tilde{\ell_{j_1}} p_{j_1}(x)) = b_{j_1} $ and $ |b_{j_1} \tilde{\ell_{j_1}} p_{j_1}(x)| \geq 1 - \varepsilon/2 $. Finally, for the remaining indexes we have $ 3 \leq A_{j_2}(x) < A_{j_3}(x) < \ldots < A_{j_m}(x) \leq 3h $, and so
    \[ \left|\sum_{i = 2}^m b_{j_i} \ell_{j_i} p_{j_i}(x)\right| \leq \sum_{i = 2}^m \left|p_{j_i}(x)\right| \leq \frac{\varepsilon}{2} \sum_{i = 3}^\infty \frac{1}{i^2} = \frac{\varepsilon}{2} \left( \sum_{i = 1}^\infty \frac{1}{i^2} - \frac{5}{4} \right) \leq \frac{\varepsilon}{2} \left( \frac{\pi^2}{6} - \frac{5}{4} \right) \leq \frac{\varepsilon}{2}. \]
    
    Therefore, for every $x \in \{0, 1\}^n$ we have $ |L(x) - p(x)| \leq \varepsilon $ and $ p(0^n) = 0 $. The bounds on degree and weight follows from the \cref{univariate} and the following observation: if $ p = a_d x^d + \ldots + a_1 x_1 + a_0 $ then
    \[ |||p(q(x))||| = |||a_d q(x)^d + \ldots + a_1 q(x) + a_0||| \leq |||a_d q(x)^d + \ldots + a_1 q(x)^d + a_0||| = \]
    \[ = ||| q(x)^d (a_d + \ldots + a_1 + a_0) ||| \leq |||q|||^d \cdot |||p|||\]
    
    By composing $ A_i $ of weight $ O(n) $ and degree $1$ with $ P_{3h, 2, \varepsilon/2} $ of weight $ 2^{O(\sqrt{h})} $ and degree $ \sqrt{h} $, we get a polynomial of degree $ \sqrt{h} $ and weight at most $ n^{O(\sqrt{h})} $.
\end{proof}

\begin{corollary} \label{ptf}
    Let $ L $ be a modified decision list of length $h$ on $ \{0, 1\}^n $. Then $L$ is computed by a PTF $p$ of degree $ O(\sqrt{h}) $ and weight $ h^{O(\sqrt{h})} $. Moreover, the $p(0^n) = 0$.
\end{corollary}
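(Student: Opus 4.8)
The plan is to obtain the corollary from \cref{approximate} by three small adjustments: replacing the ambient dimension $n$ by $h$ in the weight estimate, driving the approximation error below $1$ so that the approximator becomes a sign representation, and clearing denominators so that the coefficients become integral. First I would use that a modified decision list of length $h$ involves at most $h$ distinct variables. Viewing $L$ as a list on the cube indexed only by these variables and applying \cref{approximate} with ambient dimension at most $h$ replaces the stated weight $n^{O(\sqrt h)}$ by $h^{O(\sqrt h)}$, the degree remaining $O(\sqrt h)$; the unused variables simply do not appear, so the polynomial is already defined on $\{0,1\}^n$ with the same degree and weight, and the relation $p(0^n)=0$ is inherited verbatim.

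Next I would fix the error to any constant below $1$, say $\varepsilon=\tfrac12$, which is permitted since $\varepsilon=\Theta(1)$; write $q$ for the resulting polynomial. On each input with $L(x)=\pm1$ we have $|q(x)-L(x)|\le\varepsilon<1$, so $q(x)$ has the sign of $L(x)$ and $\sign q(x)=L(x)$. On the remaining inputs $L(x)=0$, meaning every literal is false; here I would appeal to the explicit form $q(x)=\sum_i b_i\,\tilde{\ell_i}(x)\,p_i(x)$ exhibited in the proof of \cref{approximate}, every summand of which carries the factor $\tilde{\ell_i}(x)=0$. Hence $q(x)=0$ exactly and $\sign q(x)=0=L(x)$ as well, under the convention $\sign 0=0$, so $q$ sign-represents $L$ on the whole of $\{0,1\}^n$.

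Finally I would clear denominators. \cref{univariate} furnishes a scalar $C=h^{O(\sqrt h)}$ for which $C\cdot P_{3h,2,\varepsilon/2}$ is integral, and multiplying $q$ by this same $C$ produces $p=Cq$ with integer coefficients. Positive scaling preserves every sign, so $p$ still computes $L$; its degree is unchanged, $p(0^n)=0$, and its weight equals $C\cdot|||q|||=h^{O(\sqrt h)}\cdot h^{O(\sqrt h)}=h^{O(\sqrt h)}$, matching the claim. The one point where the argument must look inside \cref{approximate} rather than use it as a black box is the case $L(x)=0$, where approximation alone gives only $|q(x)|\le\varepsilon$ and the exact equality $\sign q(x)=0$ has to come from the product form of $q$; I expect this to be the only genuinely delicate step, the remaining degree, weight, and integrality estimates being direct substitutions into \cref{approximate} and \cref{univariate}.
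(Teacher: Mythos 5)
Your proof is correct, and its engine is the same as the paper's: apply \cref{approximate} with a constant error below $1$, then multiply by the integrality constant $C$ furnished by \cref{univariate}; positive scaling preserves signs, so the scaled polynomial is a PTF computing $L$, and the degree, the vanishing at $0^n$, and the weight bound all survive the scaling. You diverge in two details, and both tighten slop in the paper's own write-up rather than take a wrong turn. First, the paper's proof, taken literally, multiplies the $n^{O(\sqrt h)}$-weight approximator of \cref{approximate} by $C$ and thus only yields weight $n^{O(\sqrt h)}$; the claimed $h^{O(\sqrt h)}$ is justified only if one reopens the proof of \cref{approximate} and notes that each $A_i$ there actually has weight $O(h)$, not just $O(n)$. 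Your preliminary restriction of $L$ to the at most $h$ variables it mentions recovers the $h^{O(\sqrt h)}$ bound through a black-box invocation of \cref{approximate}, which is cleaner. Second, you verify that $q(x)=0$ exactly, not merely $|q(x)|\le\varepsilon$, on every input where all literals are false, using the product form $q=\sum_i b_i\tilde{\ell_i}p_i$; this is precisely the property the later proof of \cref{optimal} relies on (there one needs $p_j(x)=0$ for every unsatisfied sublist so that the geometric weighting argument goes through), whereas the paper's corollary proof records only $p(0^n)=0$ and does not discuss this case. In short: same approach, with your version being the more careful of the two on the weight accounting and on the $L(x)=0$ inputs.
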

\begin{proof}
    Multiply the polynomial from the previous theorem by a constant $ C = n^{O(\sqrt{n})} $. By \cref{univariate}, every $p_i$ from the proof of \cref{approximate} now has integer coefficients. The resulting polynomial does not pointwise approximate the decision list anymore, but for every $x \in \{0, 1\}^n$ we have $ |CL(x) - p(x)| \leq C/\varepsilon $ and $ p(0^n) = 0 $. Therefore, for any sufficiently large constant $ \varepsilon $, say $ \frac{1}{100} $, $Cp$ is the desired PTF for $L$.
\end{proof}

Using \cref{ptf} as the inner approximator, we can use the outer construction by Klivans and Servedio to achieve the final PTF for the decision lists.
\begin{theorem} \label{optimal}
    Let $ L $ be a decision list of length $n$ on $ \{0, 1\}^n $. Then for any $ h < n $, $L$ is computed by a PTF of degree $ O(\sqrt{h}) $ and weight $ 2^{O(n/h + \sqrt{h}\log{h})} $.
\end{theorem}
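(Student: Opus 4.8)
The plan is to follow the two-level strategy described just before the statement: an \emph{outer} decomposition of $L$ into blocks, each handled by the \emph{inner} approximator of \cref{ptf}. First I would partition the $n$ pairs of $L$ into $m = \lceil n/h\rceil$ consecutive blocks of length at most $h$, and read block $i$ as a modified sublist $L_i$ (default bit $0$). Since the blocks respect the order of $L$, evaluating $L$ on an input $x$ amounts to finding the first block $k$ that \emph{fires} (contains a satisfied literal) and outputting $L_k(x)$; if no block fires, the output is the default bit $b_{h+1}$. Applying \cref{ptf} to each $L_i$ yields an integer polynomial $Q_i$ of degree $O(\sqrt h)$ and weight $h^{O(\sqrt h)}$ computing $L_i$ as a PTF.

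The crucial point, which I would extract from the construction inside the proof of \cref{approximate} rather than from the bare statement of \cref{ptf}, is that each $Q_i$ has the form $C\sum_j b_j\tilde\ell_j\,p_j$ and therefore vanishes \emph{identically} on every input for which block $i$ does not fire (all the factors $\tilde\ell_j$ are then $0$), while $|Q_i(x)|\in[C(1-\varepsilon),C(1+\varepsilon)]$ with $\sign Q_i(x)=L_i(x)$ whenever block $i$ does fire, for the scaling constant $C=h^{O(\sqrt h)}$ and $\varepsilon=\tfrac1{100}$. I would then set
\[ p(x)=\sum_{i=1}^m \lambda^{\,m-i+1}\,Q_i(x)+C\,b_{h+1} \]
for a sufficiently large constant $\lambda$ (e.g. $\lambda=3$), so that the weights decrease geometrically down the list. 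To verify $\sign p(x)=L(x)$, let $k$ be the first firing block. Every earlier term is exactly $0$; the term $\lambda^{m-k+1}Q_k(x)$ carries the correct sign $L_k(x)=L(x)$ with magnitude at least $\lambda^{m-k+1}C(1-\varepsilon)$; and the remaining terms (blocks $i>k$ together with the constant) are bounded by $(1+\varepsilon)C\sum_{i>k}\lambda^{m-i+1}+C$, which is $<\lambda^{m-k+1}C(1-\varepsilon)$ as soon as $\lambda-1>\frac{1+\varepsilon}{1-\varepsilon}$. When no block fires, all $Q_i(x)=0$ and $p(x)=Cb_{h+1}$ has sign $b_{h+1}=L(x)$.

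It remains to read off the parameters. The degree of $p$ is the maximum of the degrees of the $Q_i$, namely $O(\sqrt h)$. For the weight, the geometric factor contributes $\lambda^{m}=3^{O(n/h)}=2^{O(n/h)}$, each $|||Q_i|||$ contributes $h^{O(\sqrt h)}=2^{O(\sqrt h\log h)}$, and the number of blocks $m=n/h$ is only a polynomial factor, so $|||p|||\le m\,\lambda^{m}\max_i|||Q_i|||+C=2^{O(n/h+\sqrt h\log h)}$, as claimed. The step I expect to be the real crux is the vanishing property: it is what lets the geometrically growing weights $\lambda^{m-i+1}$ coexist with only a \emph{constant} approximation error $\varepsilon$, because the exponentially large early weights multiply values that are exactly $0$ rather than merely $O(\varepsilon)$. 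Were the early blocks only approximately zero, their accumulated errors would swamp the correct term $Q_k$, forcing $\varepsilon$ to be exponentially small and, by \cref{univariate}, blowing up the degree; avoiding this is precisely the payoff of the exact cancellation built into \cref{approximate}.
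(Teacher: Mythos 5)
Your proof is correct and follows essentially the same route as the paper's own: the same decomposition into consecutive modified sublists, the same geometrically weighted sum of the integer PTFs from \cref{ptf} (the paper uses exactly $3^{n/h-i+1}$), and the same sign-domination argument with the constant error $\varepsilon = \frac{1}{100}$. Your explicit observation that each $Q_i$ vanishes \emph{identically} on inputs where its block does not fire --- a property stronger than the bare $p(0^n)=0$ stated in \cref{ptf}, and which must be extracted from the construction in \cref{approximate} --- is precisely the property the paper's proof also relies on (its case ``$j<i$ implies $p_j(x)=0$''), so you have correctly identified and filled the one point where the cited statement alone would not suffice.
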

The proof goes exactly the same as in \cite[Theorem 7]{DBLP:journals/jmlr/KlivansS06}, except the inner approximator has a constant precision, which does not affect the overall correctness. For the sake of completeness, we provide the proof in \cref{appendix}.
\begin{corollary} \label{optimal_deg}
    Let $ L $ be a decision list of length $n$ on $ \{0, 1\}^n $. Then for any $ d < \sqrt{n} $, $L$ is computed by a PTF of degree $ d $ and weight $ 2^{O(n/d^2 + d\log{d})} $.
\end{corollary}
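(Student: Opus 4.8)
The plan is to derive this corollary from \cref{optimal} by a simple change of parameters, setting $h = \Theta(d^2)$. Since \cref{optimal} produces a PTF of degree $O(\sqrt{h})$, say of degree at most $C\sqrt{h}$ for an absolute constant $C$, I would choose $h = \lfloor (d/C)^2 \rfloor$ so that the resulting degree is at most $C\sqrt{h} \leq d$. For this to be meaningful I would assume $d$ is larger than some absolute constant, handling the finitely many small values of $d$ separately; there the claimed weight bound $2^{O(n)}$ is trivially achievable, e.g. by the standard representation of a decision list as a linear threshold function with geometrically growing integer weights. With this choice we have $h = \Theta(d^2)$.

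Next I would check that the hypothesis $h < n$ of \cref{optimal} is implied by the hypothesis $d < \sqrt{n}$ of the corollary. Because $C \geq 1$, the choice $h = \lfloor (d/C)^2 \rfloor \leq d^2$ already gives $h \leq d^2 < n$ directly from $d < \sqrt{n}$, so \cref{optimal} applies with this value of $h$.

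Finally I would substitute $h = \Theta(d^2)$ into the weight bound $2^{O(n/h + \sqrt{h}\log h)}$ of \cref{optimal}. The first term becomes $n/h = \Theta(n/d^2)$, and for the second term $\sqrt{h} = \Theta(d)$ together with $\log h = \log\Theta(d^2) = \Theta(\log d)$ gives $\sqrt{h}\log h = O(d\log d)$. Hence the exponent is $O(n/d^2 + d\log d)$ and the weight is $2^{O(n/d^2 + d\log d)}$, as claimed. I expect no genuine obstacle here: the only point requiring care is matching the absolute constant $C$ hidden in the degree bound $O(\sqrt{h})$ so that the degree lands at $d$ rather than merely $\Theta(d)$, together with the trivial bookkeeping for small $d$; everything else is a direct substitution.
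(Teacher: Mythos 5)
Your proposal is correct and is essentially the paper's own (implicit) derivation: the corollary is obtained from \cref{optimal} by the substitution $h = \Theta(d^2)$, which turns the degree $O(\sqrt{h})$ into $O(d)$ and the weight $2^{O(n/h + \sqrt{h}\log h)}$ into $2^{O(n/d^2 + d\log d)}$. Your additional bookkeeping---tuning the constant so the degree is at most $d$ rather than $O(d)$, checking $h < n$, and handling constantly many small $d$ via the weight-$2^{O(n)}$ linear threshold representation---is exactly the care the paper leaves implicit.
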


In \cite{beigel}, Beigel proved the lower bound for the weight of PTF for the specific decision list called ODD-MAX-BIT.
\begin{definition}
    The ODD-MAX-BIT$_n$ function on input $ x \in \{0, 1\}^n $ is equal to $ (-1)^i $ where $i$ is the position of the rightmost 1 in $x$. If $ x = 0^n $ then ODD-MAX-BIT$_n(x) = 1$.
\end{definition}
\begin{theorem}[{{\cite{beigel}}}] \label{beigel}
    Let $p$ be a degree $d$ PTF with integer coefficients which computes ODD-MAX-BIT$_n$ on $ \{0, 1\}^n $. Then $|||p||| = 2^{\Omega(n/d^2)}$.
\end{theorem}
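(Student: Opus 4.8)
The plan is to recast the statement as a lower bound on the largest value that $p$ attains on the cube. First I would observe that, since $p$ has integer coefficients, every monomial evaluates to a value in $\{0,1\}$ at a Boolean point, so $|p(x)| \le |||p|||$ for all $x \in \{0,1\}^n$. Hence it suffices to exhibit a single input $x^\ast \in \{0,1\}^n$ with $|p(x^\ast)| \ge 2^{\Omega(n/d^2)}$, and the whole task becomes the construction of such an extremal input. To prepare for it, I would partition the coordinates $[n]$ into $m = \Theta(n/d^2)$ consecutive blocks of size $s = \Theta(d^2)$, ordered by index so that higher-indexed blocks dominate the value of ODD-MAX-BIT.

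The core is an inductive construction of Boolean inputs $x^{(0)}, x^{(1)}, \ldots, x^{(m)}$ in which $x^{(k)}$ places $1$s only inside the first $k$ blocks and for which $|p(x^{(k)})| \ge (1+c)\,|p(x^{(k-1)})|$ for an absolute constant $c > 0$. For the base case, because $\sign p$ agrees with ODD-MAX-BIT the polynomial is nonzero at the points in question, and being integer it satisfies $|p(x^{(0)})| \ge 1$. Iterating the growth estimate then gives $|||p||| \ge |p(x^{(m)})| \ge (1+c)^m = 2^{\Omega(m)} = 2^{\Omega(n/d^2)}$, exactly the claimed bound, so everything reduces to the single-block amplification step.

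The amplification step is where I expect the real work to lie. The idea is that, as the rightmost $1$ sweeps across the positions of the newly activated block, ODD-MAX-BIT alternates sign, forcing the degree-$d$ polynomial $p$ to change sign many times along this family of Boolean configurations. Restricting $p$ to this family and passing to a genuinely univariate polynomial of degree at most $d$ produces a polynomial constrained to realize a prescribed alternating sign pattern on $\Theta(s)$ points. By the extremal behaviour of the Chebyshev polynomials, made quantitative by the estimate $T_d(1+\delta) \ge 1 + d^2\delta$ of \cref{chebyshev_upper} applied with $\delta = \Theta(1/s) = \Theta(1/d^2)$, such a constrained polynomial cannot remain bounded: its value at the new extreme configuration must exceed the previous one by the constant factor $1+c$. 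The choice $s = \Theta(d^2)$ is dictated precisely by the requirement that $d^2\delta$, and hence the per-block amplification, be a positive constant.

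I anticipate two main obstacles. The first is the univariate reduction: ODD-MAX-BIT is not symmetric, so one cannot naively symmetrize an entire block without scrambling the alternating signs; instead the construction must move the rightmost $1$ one position at a time and isolate a one-dimensional sub-family on which $p$ restricts to a low-degree univariate polynomial with controlled signs. The second is the cross-block bookkeeping: when the next block is activated I must guarantee that the contributions of the already-set blocks are not cancelled, so that the magnitude genuinely multiplies rather than merely flipping sign. Keeping the signs dictated by ODD-MAX-BIT consistent from one block to the next while the magnitude strictly grows is the technical heart of the proof, and it is exactly the point that the later Hamming-ball extension modifies, by reusing $1$s from earlier blocks so as to keep the total Hamming weight small.
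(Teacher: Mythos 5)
Your outer skeleton coincides with the proof strategy the paper uses for \cref{bound} (which follows Beigel's original argument): partition $[n]$ into $\Theta(n/d^2)$ blocks of size $\Theta(d^2)$, use integrality of the coefficients for the base case, amplify $|p|$ by a constant factor per block, and conclude $|||p||| \geq \max_x |p(x)| \geq 2^{\Omega(n/d^2)}$. But the amplification step---which you yourself call the technical heart---is not carried out, and the mechanism you propose for it cannot work. The family of inputs in which ``the rightmost $1$ sweeps across the new block'' is a chain of prefix inputs $1^j 0^{s-j}$ (earlier blocks held fixed), and the restriction of a degree-$d$ multilinear polynomial to such a chain is not a univariate polynomial of degree $d$ in $j$; worse, the values of a polynomial on this chain are subject to no degree constraint whatsoever: for any prescribed targets $u_0, \dots, u_s$ the \emph{degree-one} polynomial $u_0 + \sum_{j}(u_j - u_{j-1})x_j$ attains exactly these values on the chain. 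Hence sign alternation along this one-dimensional family yields no contradiction with $\deg p \leq d$, no matter how the block size is chosen. Your quantitative tool also points the wrong way: \cref{chebyshev_upper} \emph{lower}-bounds the growth of $T_d$ and is used in the paper to build approximators; a weight lower bound needs the opposite, extremal direction (a bounded degree-$d$ polynomial cannot grow or oscillate quickly, i.e.\ a Markov--Chebyshev comparison inequality), or an approximate-degree lower bound.

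The step that actually works (Beigel's, mirrored in the paper's proof of \cref{bound}) abandons the alternating pattern: in block $i+1$ the fresh variables $z_1, \dots, z_{t/2}$ are placed only at the \emph{odd} positions, with all even positions set to $0$. Then for every $z \neq 0^{t/2}$ the rightmost $1$ of the whole input sits at an odd position, so ODD-MAX-BIT outputs the same value $-1$ regardless of \emph{which} $z_j$'s are set; the restricted polynomial $P(z)$ must therefore be negative on all nonzero $z$ while $P(0^{t/2}) = p(y_i) \geq 2^i$. If, for contradiction, no filling of the block reaches magnitude $2^{i+1}$, then $P$ is also bounded, and after affine rescaling it $\frac{1}{3}$-approximates $OR_{t/2}$; the Nisan--Szegedy bound $\deg_{1/3}(OR_m) = \Omega(\sqrt{m})$ then gives $\deg P = \Omega(\sqrt{t}) > d$ once $t = Cd^2$ with $C$ large enough, a contradiction. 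So the sign structure of ODD-MAX-BIT enters only through ``every odd position produces the same output,'' which makes the restricted function OR-like and hence amenable to approximate-degree (or symmetrization plus Markov) arguments---not through an alternating pattern, which, as explained above, constrains nothing. With this restriction in place, your cross-block bookkeeping worry also disappears: the earlier blocks are frozen to $y_i$, and the new block simultaneously flips the sign and doubles the magnitude.
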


Note that \cref{optimal_deg} gives a PTF of weight $ 2^{O(n/d^2)} $ for any decision list and $d = O\left( \left( \frac{n}{\log n}\right)^{1/3}\right)$. Thus, in this range our bound is asymptotically optimal.

Using \cref{optimal}, we can also provide a slightly more efficient online learning algorithm for decision lists.
\begin{corollary}
    Let $ L $ be a decision list of length $n$ on $ \{0, 1\}^n $. Then $L$ is computed by a PTF of degree $ O(n^{1/3}) $ and weight $ 2^{O(n^{1/3}\log{n})} $.
\end{corollary}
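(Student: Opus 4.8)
The plan is to apply \cref{optimal} with a single well-chosen value of its free parameter $h$. Recall that \cref{optimal} produces, for every $h < n$, a PTF of degree $O(\sqrt{h})$ and weight $2^{O(n/h + \sqrt{h}\log h)}$. Since the target degree is $O(n^{1/3})$ and the degree of the construction scales as $\sqrt{h}$, the natural choice is $h = n^{2/3}$, which immediately gives $\sqrt{h} = n^{1/3}$ and hence degree $O(n^{1/3})$.

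It then remains to verify that this same choice yields the claimed weight. First I would substitute $h = n^{2/3}$ into the exponent: the first term becomes $n/h = n^{1/3}$, while the second term becomes $\sqrt{h}\log h = n^{1/3}\cdot \tfrac{2}{3}\log n = O(n^{1/3}\log n)$. Adding the two contributions, the exponent is $O(n^{1/3} + n^{1/3}\log n) = O(n^{1/3}\log n)$, so the weight is $2^{O(n^{1/3}\log n)}$, exactly as desired.

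There is essentially no obstacle here: the only decision to be made is the value of $h$, and it is forced by the degree requirement. The one thing worth checking is that at the balancing point $h = n^{2/3}$ the two terms in the exponent are of compatible order --- indeed $n/h = n^{1/3}$ is dominated by $\sqrt{h}\log h = \Theta(n^{1/3}\log n)$ --- so no finer optimization is possible and the logarithmic factor in the exponent is inherited directly from the $\sqrt{h}\log h$ term of \cref{optimal}.
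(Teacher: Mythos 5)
Your proof is correct and follows exactly the paper's approach: a one-line application of \cref{optimal} with a suitable choice of $h$. In fact, your choice $h = n^{2/3}$ is the right one (it forces $\sqrt{h} = n^{1/3}$ and makes the exponent $n/h + \sqrt{h}\log h = O(n^{1/3}\log n)$), whereas the paper's own proof writes $h = n^{1/3}$, which appears to be a typo since that choice would give weight $2^{O(n^{2/3})}$, not the claimed bound.
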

\begin{proof}
    Apply \cref{optimal} with $ h = n^{1/3} $.
\end{proof}
\begin{corollary}
    There is an algorithm that learns decision lists on $ \{0, 1\}^n $ in time $ n^{O(n^{1/3})} $ and mistake bound $ 2^{O(n^{1/3}\log{n})} $. 
\end{corollary}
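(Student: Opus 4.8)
The plan is to feed the degree--weight tradeoff established just above directly into the generic online learning guarantee of \cref{learning}. Concretely, I would take $C$ to be the class of all length-$n$ decision lists on $\{0,1\}^n$ and invoke the preceding corollary (which is \cref{optimal} instantiated with $h = n^{1/3}$): every such $L$ admits a PTF of degree $d = O(n^{1/3})$ and weight $W = 2^{O(n^{1/3}\log n)}$. These are precisely the two parameters that \cref{learning} consumes, so the desired learner is obtained simply by running the Winnow-based procedure underlying \cref{learning} over the set of all monomials of degree at most $d$.

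It then remains to check that the two claimed bounds fall out after substituting $d = O(n^{1/3})$ and $W = 2^{O(n^{1/3}\log n)}$. For the running time, \cref{learning} gives $n^d = n^{O(n^{1/3})}$ time per example, which matches the claim directly. For the mistake bound, \cref{learning} gives $O(W \cdot d^2 \cdot \log n)$; here the point to make is that the polynomial factor $d^2 \log n = O(n^{2/3}\log n)$ contributes only $O(\log n)$ to the exponent, so it is absorbed into the exponent $O(n^{1/3}\log n)$ already coming from $W$. Hence the overall mistake bound is $2^{O(n^{1/3}\log n)}$, as stated.

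There is essentially no genuine obstacle at this stage: the statement is a packaging corollary, and the only thing requiring care is the routine observation that the $\operatorname{poly}(n)$ overhead in \cref{learning} is swallowed by the exponent of $W$ rather than inflating the leading term. All the substantive effort has already gone into producing the low-weight, low-degree PTF representation of \cref{optimal}; this final corollary merely routes that representation through the learning reduction of \cref{learning} with the particular choice $h = n^{1/3}$, which is the setting that balances the running-time exponent $d = O(n^{1/3})$ against the weight exponent $n/h + \sqrt{h}\log h = O(n^{2/3})$.
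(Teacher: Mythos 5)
Your proposal is correct and is exactly the paper's proof: invoke the preceding corollary for a PTF of degree $d = O(n^{1/3})$ and weight $W = 2^{O(n^{1/3}\log n)}$, feed these into \cref{learning}, and observe that the $d^2\log n = O(n^{2/3}\log n)$ factor in the mistake bound is absorbed into the exponent coming from $W$. One caveat on your closing remark: instantiating \cref{optimal} with $h = n^{1/3}$ would give degree $O(\sqrt{h}) = O(n^{1/6})$ and weight exponent $n/h + \sqrt{h}\log h = O(n^{2/3})$, which is inconsistent with the values $d = O(n^{1/3})$, $W = 2^{O(n^{1/3}\log n)}$ that you (correctly) use; the instantiation yielding those values is $h = n^{2/3}$ --- this slip mirrors a typo in the paper's own preceding corollary and does not affect the present argument, since it relies only on that corollary's stated conclusion.
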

\begin{proof}
    Follow immediately from the previous corollary and \cref{learning}. 
\end{proof}

\section{Decision lists on Hamming balls} \label{limited}
In this section, we shift our focus from $ \{0, 1\}^n $ to $\{ 0, 1 \}^n_{\leq k}$. We show that both upper and lower bounds can be generalized on the new domain, and the degree-weight dependency is much more significant than it is on the Boolean cube. To be more precise, we show that if the degree parameter is low, then the Hamming ball scenario is not much different from the Boolean cube one, and we still need an exponential in terms of $n$ weight for a PTF. However, after a certain threshold it can be drastically improved. 

\subsection{Upper bound} \label{upper}
We first start with the upper bound; to achieve it, we can straightforwardly modify the proof of \cref{approximate} for the $ \{0, 1\}^n_{\leq k} $ domain.
\begin{theorem}
    Let $ L $ be a modified decision list of length $h$ on $ \{0, 1\}^n_{\leq k} $. Then for every $ \varepsilon = \Theta(1) $ $L$ can be $ \varepsilon $-approximated by a polynomial of degree $ O(\sqrt{k}) $ and weight $ n^{O(\sqrt{k})} $. Moreover, $ p(0^n) = 0 $.
\end{theorem}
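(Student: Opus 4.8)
The plan is to run the proof of \cref{approximate} essentially unchanged, replacing the univariate approximator $P_{3h,2,\varepsilon/2}$ by one whose degree is $O(\sqrt k)$ rather than $O(\sqrt h)$. Recall the shape of that proof: one writes $p(x)=\sum_{i=1}^h b_i\tilde{\ell_i}\,p_i(x)$ with $p_i(x)=P(A_i(x))$, where $A_i$ counts (three times) the true literals strictly before position $i$; for a fixed input the only surviving summands are those indexed by the true literals $j_1<\cdots<j_m$, and there $A_{j_t}=3(t-1)$. Hence $P$ is only ever evaluated on $\{0,3,\ldots,3(m-1)\}$, and its degree is dictated solely by the largest value $3(m-1)$. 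The whole task therefore reduces to showing that on $\{0,1\}^n_{\le k}$ the number $m$ of true literals is $O(k)$; granting this, one takes $P=P_{Ck,2,\varepsilon/2}$ from \cref{univariate} for a suitable constant $C$, of degree $O(\sqrt k)$ and weight $2^{O(\sqrt k)}$.

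To bound $m$ I would treat the two literal types separately. An unnegated literal $x_j$ is true only if $x_j=1$, so after deleting repeated occurrences of any literal (a duplicate is never the first satisfied literal, so this does not change the computed function) at most $k$ unnegated literals are true, one per $1$ in the input. The negated literals are the crux, and the only place where sparsity of the domain is genuinely used: a negated literal $\overline{x_j}$ is true exactly when $x_j=0$, so on a weight-$\le k$ input almost all of them are true, and an unrestricted list could have $\Theta(h)$ of them true at once. I would tame this by truncating to the \emph{reachable} prefix. Call position $i$ reachable if some $x\in\{0,1\}^n_{\le k}$ falsifies $\ell_1,\ldots,\ell_{i-1}$ simultaneously; after the duplicate deletion the negated literals use distinct variables, and each falsified negated literal forces one of the at most $k$ ones, so a reachable prefix contains at most $k$ negated literals. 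If $i_{\max}$ is the last reachable position, then on every weight-$\le k$ input some $\ell_j$ with $j\le i_{\max}$ is already true, so positions past $i_{\max}$ never fire and discarding them (with modified tail $0$) leaves the function on the ball unchanged while keeping at most $k+1$ negated literals. Thus $m\le 2k+1=O(k)$.

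With $m=O(k)$ in hand the rest is a transcription of \cref{approximate}: the firing term contributes $b_{j_1}P(0)$, which is within $\varepsilon/2$ of $b_{j_1}$; the remaining terms are bounded by $\sum_{t\ge 2}|P(3(t-1))|\le\frac{\varepsilon}{2}\sum_{t\ge1}(3t)^{-2}\le\varepsilon/2$; and $p(0^n)=0$ holds for exactly the same reason as there. For the weight, after the duplicate-deletion step the list has $O(n)$ literals, and composing each degree-$1$, weight-$O(n)$ polynomial $A_i$ with $P_{Ck,2,\varepsilon/2}$ gives weight $O(n)^{O(\sqrt k)}\cdot 2^{O(\sqrt k)}=n^{O(\sqrt k)}$ per term, so the whole sum has weight $n^{O(\sqrt k)}$ and degree $O(\sqrt k)$. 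I expect the truncation-to-reachable-positions step --- the argument that at most $k$ negated literals survive --- to be the only real obstacle; in the monotone case it is vacuous and $m\le k$ is immediate, while everything else is inherited verbatim from \cref{approximate}.
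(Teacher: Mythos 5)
Your proof is correct, and it follows the paper's high-level plan --- rerun the argument of \cref{approximate}, discard the terms that are identically zero on $\{0,1\}^n_{\leq k}$, and invoke \cref{univariate} with range parameter $O(k)$ instead of $O(h)$ --- but the central quantitative claim is established by a genuinely different argument. The paper never counts true literals: for each surviving index $i$ it fixes a point $x'$ of the ball with $A_i(x') = 0$, notes that every $x$ in the ball has $|x \oplus x'| \leq 2k$ and that a single bit flip changes $A_i$ by at most $3$, and concludes $A_i(x) \leq 6k$ pointwise. You instead bound the number $m$ of satisfied literals by $2k+1$ via deduplication, truncation to the reachable prefix, and the observation that falsifying distinct negated literals requires distinct $1$s in the input. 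The two routes are equivalent in effect: your reachable positions are exactly the indices the paper keeps (reachability is monotone, so they form a prefix), and your $A_{j_t}(x) = 3(t-1) \leq 6k$ is numerically the same bound, so the same $P_{O(k),2,\cdot}$ is used afterwards. Your version is longer but has two small benefits: it explicitly handles a variable occurring in several literals (the paper's claim that one flip affects at most one literal of $T_i$ tacitly assumes literals on distinct variables; otherwise the constant merely doubles), and the deduplication caps the effective list length at $O(n)$, which the stated weight bound $n^{O(\sqrt{k})}$ implicitly needs when $h \gg n$. One caveat you inherit from the paper rather than introduce: when negated literals are present, $p(0^n)=0$ can fail as literally stated; what both constructions actually guarantee --- and what \cref{ptf_k} uses --- is that $p$ vanishes at every point of the ball where no literal of $L$ is satisfied.
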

\begin{proof}
    Recall the following definitions from the proof of \cref{approximate}
    \[ L(x) = \sum_{i = 1}^h b_i \ell_i \bigwedge_{j = 1}^{h - 1} \overline{\ell_j} = \sum_{i = 1}^h b_i \ell_i T_i(x) \qquad A_i(x) = 3(i - 1) - \tilde{\ell_1} - \ldots - \tilde{\ell}_{i - 1} \]
    
    We need to reexamine $ A_i(x) $. If for every $ x \in \{0, 1\}^n_{\leq k} $ we have $ A_i(x) > 0 $, then $ T_i(x) \equiv 0 $ on $ \{0, 1\}^n_{\leq k} $ and we can safely remove the corresponding term from the sum without affecting it's value on any input. Otherwise, let $ x' \in \{0, 1\}^n_{\leq k} $ be such an input that $ A_i(x') = 0 $. Flipping the value of one bit in $x'$ can affect at most one literal in $ T_i $, so if $ |x \oplus x'| = 1 $ then $ A_i(x) \leq 3 + A_i(x') $.
    
    Notice that because we are only interested in $ x \in \{0, 1\}^n_{\leq k} $, for any $ x $ we have $ |x \oplus x'| \leq |x \vee x'| \leq 2k $ and $ A_i(x) \leq 3 \cdot 2k + A_i(x') = 6k $. Thus, we can approximate $ A_i(x) $ with $ p_i(x) = P_{6k, 2, 2\varepsilon}(A_i(x)) $ instead of $ P_{3n, 2, \varepsilon/2} $ (where both polynomials are from \cref{univariate}), and this results in the desired bounds on degree and weight.
\end{proof}

\begin{corollary} \label{ptf_k}
    Let $ L $ be a decision list of length $n$ on $ \{0, 1\}^n_{\leq k} $. Then $L$ is computed by a PTF $p$ of degree $ O(\sqrt{k}) $ and weight $ n^{O(\sqrt{k})} $.
\end{corollary}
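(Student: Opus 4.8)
The plan is to deduce the statement from the preceding approximation theorem by a single application to the whole list, and then to convert the resulting $\varepsilon$-approximator of the modified list into an integer-weight PTF of the original (unmodified) list. Crucially, the degree and weight bounds of that theorem do not involve the length parameter $h$, so --- unlike on the Boolean cube --- there is no need for the outer decomposition into $n/h$ sublists: I would apply it once, with $h=n$, to the full list.

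Concretely, write $L=(\ell_1,b_1),\dots,(\ell_n,b_n),b_{n+1}$ and let $L'$ be its modified version, obtained by replacing the final bit $b_{n+1}\in\{-1,1\}$ with $0$. Then $L'(x)=L(x)$ on every input where some literal is true, while $L'(x)=0$ precisely on the inputs where no literal is true, on which $L(x)=b_{n+1}$. Applying the preceding theorem to $L'$ with a small constant, say $\varepsilon=1/4$, gives a polynomial $q$ of degree $O(\sqrt{k})$ and weight $n^{O(\sqrt{k})}$ with $|q(x)-L'(x)|\le\varepsilon$ for all $x\in\{0,1\}^n_{\leq k}$. I would then set $p(x)=q(x)+\tfrac{b_{n+1}}{2}$ and verify $\sign p = L$ by cases on $L'(x)\in\{-1,0,1\}$: if $L'(x)=\pm1$ then $|q(x)|\ge 1-\varepsilon>\tfrac12$, so the additive constant of magnitude $\tfrac12$ cannot flip the sign and $\sign p(x)=\sign L'(x)=L(x)$; if $L'(x)=0$ then $|q(x)|\le\varepsilon<\tfrac12$, so $p(x)$ inherits the sign of $b_{n+1}/2$, giving $\sign p(x)=b_{n+1}=L(x)$.

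The remaining, routine, step is integrality and weight. As in the proof of \cref{ptf}, multiplying by the constant $C=k^{O(\sqrt{k})}$ furnished by \cref{univariate} (instantiated with first parameter $6k$, the value used in the preceding theorem) clears all denominators, since each $A_i$ already has integer coefficients; this multiplies the weight by $C$, and because $k\le n$ the bound stays $n^{O(\sqrt{k})}$. To absorb the constant $b_{n+1}/2$ as well I would output $2C\,p(x)=2C\,q(x)+C\,b_{n+1}$, which has integer coefficients, the same sign as $p$, degree $O(\sqrt{k})$, and weight $n^{O(\sqrt{k})}$. I do not expect a genuine obstacle here: all the work sits in the preceding approximation theorem, and the only new point to check is that a sufficiently small constant error lets the single additive shift $b_{n+1}/2$ correct the default output without disturbing the sign anywhere else.
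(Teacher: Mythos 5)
Your proof is correct, and all the real work indeed sits in the preceding approximation theorem; but your route differs from the paper's in a small, instructive way. The paper proves \cref{ptf_k} by invoking the outer construction of \cref{optimal} verbatim: decompose $L$ into modified sublists $f_i$, replace each by its integer approximator, and combine with the weights $3^{n/h-i+1}$ plus the term $Cb_{n+1}$. You instead observe that on the Hamming ball this decomposition is pointless, because the degree and weight of the inner approximator are $O(\sqrt{k})$ and $n^{O(\sqrt{k})}$ \emph{independently of the list length}, so a single application to the whole modified list suffices; your final polynomial $2Cq(x)+Cb_{n+1}$ is, up to the constant in front of $q$, exactly the paper's construction specialized to one block ($n/h=1$). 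Your version buys self-containedness and makes explicit why no tradeoff parameter $h$ appears in the statement; it also quietly fixes an imprecision in the paper's one-line proof, since running \cref{optimal} with a general block size $h$ would incur an extra $2^{O(n/h)}$ factor in the weight, forcing one to take $n/h = O(\sqrt{k}\log n)$ (e.g., a single block, as you do) for the claimed bound $n^{O(\sqrt{k})}$ to come out. The only points to watch --- that the constant $C$ from \cref{univariate} (instantiated at $6k$) can be taken integral so that $Cb_{n+1}$ is an integer, and that $\varepsilon$ is a small enough constant for the additive shift $b_{n+1}/2$ to dominate only where the modified list vanishes --- are handled correctly in your argument and are the same assumptions the paper itself makes in \cref{ptf} and \cref{optimal}.
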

\begin{proof}
    The proof is absolutely the same as in \cref{optimal}.
\end{proof}

We note that the same ideas can be applied to \cite[Theorem 6]{DBLP:journals/jmlr/KlivansS06}, but it would result in a bound of $ \deg(p) = O(\sqrt{k}\log{n}) $ and $ |||p||| = 2^{O(\sqrt{k}\log^2{n})} $, which is much worse if $ k \ll n $.

As a side result, we also get an online-learning algorithm for decision lists on $ \{0, 1\}^n_{\leq k} $.

\begin{corollary}
    There is an algorithm that learns decision lists on $ \{0, 1\}^n_{\leq k} $ in time $ n^{O(\sqrt{k})} $ and with mistake bound $ n^{O(\sqrt{k})} $. 
\end{corollary}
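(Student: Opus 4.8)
The plan is to derive the corollary directly from the PTF representation established in \cref{ptf_k} together with the generic online-learning guarantee of \cref{learning}. The crucial observation, already recorded in the paragraph preceding \cref{learning}, is that the Expanded-Winnow algorithm (equivalently, Winnow run over all monomials of degree at most $d$) operates over an arbitrary domain $S \subseteq \{0,1\}^n$, so \cref{learning} applies verbatim with $S = \{0,1\}^n_{\leq k}$ and with $C$ taken to be the class of decision lists on this domain.

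First I would set $C$ to be the class of all decision lists of length $n$ on $\{0,1\}^n_{\leq k}$. By \cref{ptf_k}, every $L \in C$ is computed on $\{0,1\}^n_{\leq k}$ by a PTF of degree $d = O(\sqrt{k})$ and weight $W = n^{O(\sqrt{k})}$. These are exactly the two quantities that \cref{learning} consumes, so no further preparation is needed.

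Next I would instantiate \cref{learning} with these $d$ and $W$. The per-example running time becomes $n^d = n^{O(\sqrt{k})}$, which is already in the claimed form. For the mistake bound I would substitute $W$, $d$, and the $\log n$ factor into $O(W \cdot d^2 \cdot \log n)$, obtaining $O\!\left(n^{O(\sqrt{k})} \cdot k \cdot \log n\right)$. Since $k \le n$ and $\log n$ are both polynomially bounded in $n$, the factor $d^2 \log n = O(k \log n)$ is absorbed into the exponent $O(\sqrt{k})$ of $n$, so the mistake bound collapses to $n^{O(\sqrt{k})}$ as well.

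The argument is essentially immediate, so there is no serious obstacle; the only points that require care are bookkeeping. I would double-check that \cref{learning} is genuinely stated for subsets $S \subseteq \{0,1\}^n$ rather than only for the full cube (it is, which is precisely why the theorem was phrased at that level of generality), and I would verify that the polynomial factors $d^2$ and $\log n$ really do get swallowed by the $n^{O(\sqrt{k})}$ term rather than altering its order of growth. Both checks are routine, and the corollary then follows at once.
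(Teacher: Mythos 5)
Your proposal is correct and is exactly the paper's (implicit) argument: instantiate \cref{learning} with the class of decision lists over $S = \{0,1\}^n_{\leq k}$, using the degree $O(\sqrt{k})$ and weight $n^{O(\sqrt{k})}$ bounds from \cref{ptf_k}, and absorb the $d^2 \log n$ factor into $n^{O(\sqrt{k})}$. The bookkeeping you flag (the domain generality of \cref{learning} and the absorption of polynomial factors) is handled just as you describe, so nothing is missing.
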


\subsection{Lower bound} \label{lower}

While the previous proof is straightforward, it requires $d = \Theta(\sqrt{k})$. It turns out that it is not a coincidence: if we want to lower the degree even further, we would need to drastically increase the weight parameter. In fact, \cref{optimal_deg} is tight in the case of Hamming Balls as well as in the case of the Boolean Cube.

\begin{theorem} \label{bound}
    Let $p$ be a degree $d$ PTF with integer coefficients which computes ODD-MAX-BIT$_n$ on $ \{0, 1\}^n_{\leq k} $ and $d=O(\sqrt{k})$. Then $|||p||| = 2^{\Omega(n/d^2)}$.
\end{theorem}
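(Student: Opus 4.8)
The plan is to lift Beigel's proof of \cref{beigel} to the Hamming ball, keeping the same inductive engine but controlling the number of $1$'s in every input we produce. Recall the shape of that argument: one partitions the coordinates $[n]$ into $m = \Theta(n/d^2)$ consecutive blocks of size $b = \Theta(d^2)$ and inductively builds a sequence of inputs $x^{(0)}, x^{(1)}, \ldots, x^{(m)}$, respecting $\sign p = \text{ODD-MAX-BIT}$ along the way, so that $|p(x^{(j)})|$ grows by a constant factor $c > 1$ at each step. Since $x^{(m)} \in \{0,1\}^n$ makes every monomial evaluate to $0$ or $1$, we have $|||p||| \geq |p(x^{(m)})| \geq c^{m} = 2^{\Omega(n/d^2)}$, which is exactly the desired bound. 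The engine that produces the per-step growth is local: to pass from $x^{(j)}$ to $x^{(j+1)}$ one modifies coordinates only inside the single block $B_{j+1}$ sitting just above the current rightmost $1$, and exploits the sign alternation that ODD-MAX-BIT forces along a staircase of inputs through $B_{j+1}$ together with the degree bound $d$ to extract a configuration on which $|p|$ is at least $c$ times larger.

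The only place where this construction can leave the ball $\{0,1\}^n_{\leq k}$ is the accumulation of $1$'s: filling block after block, the final input $x^{(m)}$ can carry $\Theta(m b) = \Theta(n)$ ones. The new ingredient is to observe that the value of ODD-MAX-BIT depends solely on the position of the rightmost $1$, so once the induction has advanced its ``frontier'' past a block, the $1$'s sitting in the far-lower blocks are irrelevant to every sign constraint imposed from then on. I therefore maintain the $1$'s inside a sliding window consisting only of the few most recent blocks around the current frontier: when block $B_{j+1}$ is activated, the $1$'s of the trailing low-index blocks are reset to $0$ and their coordinates are effectively reused. Because $d = O(\sqrt{k})$, a block has size $b = \Theta(d^2) = O(k)$, so a window of $O(1)$ blocks fits comfortably inside the Hamming ball and every input $x^{(j)}$ produced by the modified construction has at most $k$ ones.

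The hard part will be showing that trimming the trailing $1$'s does not wipe out the growth we have accumulated, since $p$ genuinely depends on all coordinates and zeroing far-away $1$'s can change $p(x^{(j)})$. To handle this I would localize the inductive invariant: rather than tracking the single number $|p(x^{(j)})|$, I track the magnitude of $p$ relative to the behaviour on the current window, and argue that each trimming step costs at most a bounded multiplicative factor while each block advance gains a strictly larger constant factor, so that the net per-step gain is still a constant $c > 1$. Concretely this means re-deriving the per-block alternation/growth estimate in a form that references only the coordinates of the active window, with the trailing coordinates fixed to $0$ rather than to their ``filled'' values, and checking that the low-degree obstruction powering \cref{beigel} still applies verbatim, because the relevant staircase of test inputs and their ODD-MAX-BIT signs are unchanged by the reuse. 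Once the invariant is shown to survive trimming, the number $m = \Theta(n/d^2)$ of surviving steps is unchanged, every input stays within $\{0,1\}^n_{\leq k}$, and the final inequality $|||p||| \geq 2^{\Omega(n/d^2)}$ follows as before.
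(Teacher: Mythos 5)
Your overall frame (blockwise induction, exponential growth of $|p|$, and the observation that reusing 1's is permissible because ODD-MAX-BIT depends only on the rightmost 1) matches the paper's, but the execution you propose for the step you yourself flag as the hard part does not work, and this is precisely where the paper's proof differs from yours. You propose to first \emph{trim} the trailing 1's of the current input $x^{(j)}$ and then run the per-block growth argument inside the next block, with the damage controlled by an invariant of the form ``each trimming step costs at most a bounded multiplicative factor.'' No such bound is available: the ODD-MAX-BIT constraints pin down only the \emph{sign} of $p$ at the trimmed input (its rightmost 1 is unchanged), never its magnitude, so trimming can legally collapse $|p|$ from $c^{j}$ all the way down to $1$ (the only floor that integrality provides), wiping out all accumulated growth. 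The same objection applies to your ``localized'' variant in which the staircase through the active block is run with the trailing coordinates pre-set to $0$: the base point of that staircase is then the trimmed input, whose value is not controlled by the induction hypothesis, which speaks about $x^{(j)}$ and not about its trimmed version. In either form, the invariant you need cannot be derived from the hypotheses at hand.

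The paper closes exactly this gap by never trimming the base point: the removal of old 1's is \emph{coupled} to the insertion of new ones. In the inductive step one introduces fresh variables $z_1,\dots,z_{t/2}$ for the odd positions of block $i+1$ and simultaneously substitutes $1-z_j$ for the coordinate carrying the $j$-th existing 1 of $y_i$, obtaining a polynomial $P(z)$ of degree at most $d$. At $z=0^{t/2}$ nothing has been altered, so $P(0^{t/2})=p(y_i)$ is exactly the inductively large value; for $z\neq 0^{t/2}$ the rightmost 1 sits at an odd position of the new block, so the sign of $P(z)$ is forced by ODD-MAX-BIT, and the switched-off lower 1's are harmless by your own (correct) observation. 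Since every new 1 replaces an old one, every input produced has at most $t=O(d^2)\leq k$ ones. Finally, if no filling achieved $|p|\geq 2^{i+1}$, then after an affine rescaling $P$ would $\tfrac{1}{3}$-approximate $OR_{t/2}$ in degree $d$, contradicting $\deg_{1/3}(OR_{t/2})=\Omega(\sqrt{t})$ \cite{DBLP:journals/cc/NisanS94} once the hidden constant in $t=\Theta(d^2)$ is chosen suitably. So the missing idea in your proposal is this simultaneous substitution (together with the approximate-degree-of-OR contradiction, rather than a cost-accounting invariant); with it the induction goes through, and without it the trimming step has no justification.
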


The new proof is heavily based on the proof of \cref{beigel}; the main difference is the way we construct a polynomial that achieves a contradiction on its approximate degree.

\begin{proof}
    The proof goes as follows. We first partition $ [n] $ into blocks of even size about $ t = O(d^2) $. Note that we got $ r = \Omega(n/d^2) $ blocks in total. Then we prove that for every block $i$ we can find an input $y_i$ such that it may have 1's only in the first $i$ blocks and $ |p(y_i)| \geq 2^i $. If we succeed, then we get an input $ y_r $ such that $ |p(y_r)| \geq 2^r = 2^{\Omega(n/d^2)} $ and, as a corollary, $|||p||| = 2^{\Omega(n/d^2)}$. We will prove this claim by induction.
    
    We adjust the hidden constant in $d=O(\sqrt{k})$ so we can assume $k \geq t$. With that in mind, let $ y_0 = 1^t0^{n-t} $, i.e. the first $t$ bits are 1s and the remaining bits are 0s. Because $p$ has integer coefficients, we have $ |p(y_0)| \geq 1 $ and we are done with the base case. Now suppose without loss of generality that we have $ p(y_i) \geq 2^i $ (the case of negative $p(y_i)$ is completely analogous), we will prove that we can fill the $ (i + 1) $'s block in $ y_i $ and get $ y_{i + 1} $ such that $ p(y_{i + 1}) \leq -2^{i + 1} $.
    
    Let $ P(z) $: $ \{0, 1\}^{t/2} \to \R $ be constructed by the following constraints on the input of $p$.
    \begin{enumerate}
        \item Every block up to $i$'s is filled as in $ y_i $.
        \item In the $ (i+1) $'s block every odd bit becomes a new variable, and every other bit is set to 0.
        \item Let $ x_1, \ldots, x_t $ be the positive bits in the current input and let $z = (z_1, \ldots, z_{t/2})$ be the new set of variables. We change the value of every $x_i$ to $ 1 - z_i $.
        \item All the remaining bits are also set to 0.
    \end{enumerate}
    
    We first prove that for any $z$ the achieved input is indeed in $ \{0, 1\}^n_{\leq k} $. Note that because we started with $t$ 1s in $y_0$, every time we set $ z_i $ to 1 we change the corresponding $ x_i $ to 0. Thus, we never increase the number of 1s in the input, and because $ k \geq t $ it is in $ \{0, 1\}^n_{\leq k} $.
    
    Now suppose by contradiction that the desired $ y_{i+1} $ does not exist, i.e. for every $ x $ achieved by filling the $ (i+1) $'s block of $ y_i $, we have $ |p(x)| \leq 2^{i+1} $. By definition, $ P(0^{t/2}) = p(y_i) > 2^i $ but for every $ z \neq 0^{t/2} $ we have $-2^{i + 1} < p(z) < 0$ where the first inequality follows from $ |P(x)| \leq 2^{i + 1} $ and the second one follows from the definition of the ODD-MAX-BIT$_n$.

    We claim that the polynomial $ \tilde{P}(z) = -\frac{1}{3 \cdot 2^i}P(z) $ $\frac{1}{3}$-approximates the $ OR_{t/2} $ function. Indeed, $ \tilde{P}(0) = \frac{1}{3} $ and for every $ z \in \{0, 1\}^{t/2} \setminus \{0^{t/2}\} $ we have $ \frac{2}{3} \leq \tilde{P}(z) \leq \frac{4}{3} $. It is well known (see \cite{DBLP:journals/cc/NisanS94}) that $ \deg_{1/3}(OR_{n}) = \Omega(\sqrt{n}) $. For a suitable choice of the hidden constant in $t$ we get that $ \deg(\tilde{P}) > \sqrt{t} = d $, but at the same time $ \deg(\tilde{P}) = \deg(P) \leq \deg(p) = d $, which contradicts the non-existence of the desired $ y_{i+1} $.
\end{proof}

\section{Discussion}

Despite having optimal bounds on decision lists for almost all $ d = O(n^{1/3}) $, the situation is less clear for other values of $d$. As we mentioned in the introduction, Servedio, Tan and Thaler~\cite[Theorem 6]{DBLP:journals/jmlr/ServedioTT12} proved that for any $ n^{1/4} \leq d \leq n $ any decision list $L$: $ \{-1, 1\}^n \to \{-1, 1\} $ on $n$ variables has a degree-$d$ PTF of weight $ 2^{\tilde{O}_n((n/d)^{2/3})} $. This gives an upper bound for $ d = \Omega(n^{1/2}) $, but it uses $ \{-1, 1\}^n $ domain rather than $ \{0, 1\}^n $. They also proved a lower bound of $ 2^{\Omega(\sqrt{n/d})} $ for both domains and $ d = o\left( \frac{n}{\log^2{n}} \right) $, which is stronger than \cref{beigel} for $ d = \Omega(n^{1/3}) $. It uses the same approach, but a different technique for bounding the degree on a block, which suggests that a matching upper bound (if there is one) should have different construction than the one achieved by \cref{ptf}. We also note that because of the block approach this lower bound can be also adapted for Hamming Balls.

As for the approximation on $ \{ 0, 1 \}^n_{\leq k} $, an obvious open question is to generalize \cref{bound} for $ k = o(d^2) $. Another open question is to find optimal degree-weight tradeoffs for pointwise approximation of Boolean functions on both $ \{0, 1\}^n $ and $ \{0, 1\}^n_{\leq k} $, as suggested in \cite{DBLP:journals/toct/HuangV22}. We suspect that by including $ \varepsilon $ into analysis of \cref{univariate} it can be generalized for any $ \varepsilon = o(1) $ as well to achieve a similar to \cref{approximate} result. After that, the similar ideas used in \cite{DBLP:conf/icalp/BogdanovW17} and \cite{DBLP:journals/toct/HuangV22} can be applied to get degree-weight traidoff results for a class of decision lists.

\bibliographystyle{splncs04}
\bibliography{sample}

\appendix

\section{Ommited proofs}\label{appendix}

\begin{proof}[of \cref{optimal}]
    We can represent $L$ as follows:
    \[ L(x) = \sign C \left( \sum_{i=1}^{n/h} 3^{n/h - i + 1} f_i(x) + b_{n+1} \right). \]
    with $C$ from \cref{univariate}, where each $ f_i $ is a sublist responsible for the bits on the interval $ x_{(i - 1)h + 1}, \ldots, x_{ih} $, which returns zero if no literals on this interval is satisfied. It is straightforward to check the correctness of the above decomposition: if $f_i$ contains the first satisfied literal, then it determines the sign of the sum because $ 3^{n/h - i + 1} > \sum_{j < i} 3^{n/h - j + 1} $ and $ f_j(x) = 0 $ for every $ j < i$.
    
    Every sublist $ f_i $ is a modified decision list, so we can apply \cref{ptf} and replace every $ Cf_i(x) $ with $ p_i(x) $, resulting in
    \[ H(x) = \sum_{i=1}^{n/h} 3^{n/h - i + 1} p_i(x) + Cb_{n+1}. \]
    
    Our goal is to show that $ H(x) $ is the desired PTF. First of all, if $ x = 0^n $ then every $ p_i(x) = 0 $ and $ \sign{H(x)} = b_{n+1} $. Otherwise, let $ r = (i - 1)h + c $ be the first bit such that $ \ell_r $ is satisfied. We have several cases: 
    \begin{enumerate}
        \item If $ j < i $ then $ 3^{n/h - i + 1} p_j(x) = 0 $;
        \item $ 3^{n/h - i + 1} p_i(x) $ differs from $ 3^{n/h - i + 1} Cb_r $ by at most $ C3^{n/h - i + 1} \frac{1}{100} $;
        \item The magnitude of each of the remaining values is at most $ C3^{n/h - j + 1} \left(1 + \frac{1}{100}\right)$.
    \end{enumerate}
    
    Combining these bounds, the value of $ H(x) $ differs from $ 3^{n/h - i + 1} Cb_r $ by at most
    \[ C \left( 1 +  \frac{3^{n/h - i + 1}}{100} + \left( 1 + \frac{1}{100} \right) \sum_{j = 1}^{n/h - i} 3^j \right). \]
    
    The value of the sum is less than $ \frac{3^{n/h - i + 1}}{2} $, so the overall value is less than $ C3^{n/h - i + 1} $ and $ \sign{H(x)} = b_r $. The bounds on degree and weight follows from \cref{ptf}.
\end{proof}

\end{document}